\newcommand{\mc}{\mathcal}
\newcommand{\alp}{{\mathsf{alph}}}
\newcommand{\rhs}{{\mathsf{rhs}}}
\newcommand{\pos}{{\mathsf{pos}}}
\newcommand{\reduce}{{\mathsf{reduce}}}
\newcommand{\val}{{\mathsf{val}}}
\newcommand{\dA}{{\mathcal{G}}}
\newcommand{\dB}{{\mathcal{H}}}
\newcommand{\labels}{\mathrm{labels}}
\newcommand{\nat}{\mathbb{N}}
\def\LZ78{${\mathsf{LZ78}}$}
\theoremstyle{plain}
\newtheorem{proposition}{Proposition}
\newtheorem{theorem}{Theorem}
\newtheorem{example}{Example}
\newtheorem{lemma}{Lemma}
\theoremstyle{nonumberplain}
\newtheorem{proof}{Proof}
\begin{document}

\title{Traversing Grammar-Compressed Trees with\\ Constant Delay}

\author[1]{Markus Lohrey}
\author[2]{Sebastian Maneth}
\author[1]{Carl Philipp Reh}

\affil[1]{Universit\"at Siegen, Germany}
\affil[2]{University of Edinburgh, UK}

\date{}

\maketitle

\begin{abstract}
A grammar-compressed ranked tree is represented with a linear space overhead so that
a single traversal step, i.e., the move to the parent or the $i$th child, can be carried out in constant time.
Moreover, we extend our data structure such that equality of subtrees can be checked in constant time.
\end{abstract}

\section{Introduction}
Context-free grammars that produce single strings are a widely studied compact string representation
and are also known as  {\em straight-line programs} (SLPs).
For instance, the string $(ab)^{1024}$ can be represented
by the SLP with the rules $A_0 \to ab$ and $A_i \to A_{i-1} A_{i-1}$ for $1 \leq i \leq 10$
($A_{10}$ is the start
symbol).
In general, an SLP of size $n$ can produce a string of length $2^{\Omega(n)}$.
Besides grammar-based compressors
(e.g. \LZ78, $\mathsf{RePair}$, or $\mathsf{BISECTION}$, see \cite{CLLLPPSS05} for more details)
that derive an SLP from a given string, also algorithmic problems on SLP-compressed strings
such as pattern matching, indexing, and compressed word problems
have been investigated thoroughly, see \cite{Loh12survey} for a survey.

Motivated by applications where large tree structures occur, like XML processing,
SLPs have been extended to node-labelled ranked ordered trees~\cite{MLMN13,BuLoMa07,HuckeLN14,JezLo14approx,LohreyMM13}.
In those papers, straight-line linear context-free tree grammars are used.
Such grammars produce a single tree and are also known as tree straight-line programs~(TSLPs).
TSLPs generalize dags (directed acyclic graphs), which are widely used as compact tree representation. Whereas
dags only allow to share repeated subtrees, TSLPs can also share repeated internal tree patterns
(i.e., connected subgraphs).
The grammar-based tree compressor from~\cite{HuckeLN14} produces for every tree
(for a fixed set of node labels) a TSLP of size  $\mc O(\frac{n}{\log n})$ and height $O(\log n)$, which is worst-case optimal.
Various querying problems on TSLP-compressed trees such as
XPath querying and evaluating tree automata are studied in~\cite{LoMa06,LoMaSS12,ManethS13}.

In this paper we study the problem of navigating in a TSLP-represented tree:
Given a TSLP $\dA$ for a tree $t$,
the task is to precompute in  time $O(|\dA|)$ an $O(|\dA|)$-space data structure
that allows to move from a node of $t$ in time $O(1)$ to its parent node or to its $i$th child
and to return in time $O(1)$ the node label of the current node.
Here the nodes of $t$ are represented in space $O(|\dA|)$ in a suitable way.
Such a data structure has been developed for string SLPs in~\cite{GasieniecKPS05};
it allows to move from left to right over the string produced by the SLP
requiring time $O(1)$ per move.
We first extend the data structure from~\cite{GasieniecKPS05} so that
the string can be traversed in a two-way fashion, i.e.,
in each step we can move either to the left or right neighboring position in constant time.
This data structure is then used to navigate in a TSLP-represented tree.

TSLPs are usually used for the compression of ranked trees, i.e., trees where the maximal number $r$ of children of a node
is bounded by a constant. For instance, the above mentioned bound
$\mc O(\frac{n}{\log n})$ from \cite{HuckeLN14} assumes a constant $r$.
In many applications, $r$ is indeed bounded (e.g., $r=2$ for the following two encodings of unranked trees).
For unranked trees where $r$ is unbounded, it is more realistic to require that the data structure supports
navigation to
(\emph{i})~the parent node,
(\emph{ii})~the first child, (\emph{iii})~the right sibling, and (\emph{iv})~the left sibling.
We can realize these operations by using a suitable constant-rank encoding of trees.
Two folklore binary tree encodings of an unranked tree $t$ with maximal rank $r$ are:
\begin{itemize}
\item
First-child/next-sibling encoding $\mathsf{fcns}(t)$:
The left (resp. right) child of a node in $\mathsf{fcns}(t)$ is the first child (resp., right sibling) in $t$.
On this encoding, we can support $O(1)$ time navigation for (\emph{ii})--(\emph{iv}) of above.
The parent move (\emph{i}) however, requires $O(r)$ time.
\item
Binary encoding:
We define the binary encoding $\mathrm{bin}(t)$ by adding for every node $v$ of rank $s \leq r$
a binary tree of depth $\lceil \log s \rceil$ with $s$ many leaves,
whose root is $v$ and whose leaves are the children of $v$.
This introduces at most $2s$ many new binary nodes (labelled by a new symbol).
Thus $|\mathrm{bin}(t)| \leq 3 |t|$.
Every navigation step in the original tree can be simulated by $O(\log r)$ many navigation
steps in $\mathrm{bin}(t)$.
\end{itemize}
Our second main result concerns subtree equality checks. This is the problem of checking for two
given nodes of a tree, whether the subtrees rooted at these two node are identical. We extend our
data structure for tree navigation such that subtree equality checks can be done in time $O(1)$.
The problem of checking equality of subtrees occurs in several different contexts, see for instance
\cite{CaiP95} for details. Typical applications are common subexpression detection, unification, and
non-linear pattern matching. For instance,
checking whether the pattern $f(x,f(y,y))$ is matched at a certain tree node needs a constant number of navigation
steps and a single subtree equality check.

\medskip
\noindent
{\bf Further related work.} \quad
The ability to navigate efficiently in a tree is a basic prerequisite
for most tree querying procedures. For instance, the DOM representation
available in web browsers through JavaScript,
provides tree navigation primitives (see, e.g.,~\cite{DBLP:conf/edbt/DelprattRR08}).
Tree navigation has been intensively studied in the context of succinct tree representations.
Here, the goal is to represent a tree by a bit string, whose length is asymptotically equal to the
information theoretic lower bound. For instance, for binary trees of $n$ nodes,
the information theoretic lower bound is $2n + o(n)$ and there exist
succinct representations that encode a binary tree of size $n$ by a bit string of
length $2n + o(n)$.  In addition there exist such encodings that allow to navigate in the tree in constant time
(and support many other tree operations), see e.g.~\cite{DBLP:journals/talg/NavarroS14} for a survey.

Recently, grammatical formalisms for the compression of unranked trees have been proposed as well.
In~\cite{BilleGLW13} the authors consider so called top dags as a compact tree representation.
Top dag can be seen as a slight variant of TSLPs for an unranked tree.
It is shown in~\cite{BilleGLW13} that for every tree of size $n$ the top dag has size
$O(\frac{n}{\log^{0.19} n})$. Recently, this bound was improved to $O\big(\frac{n \cdot \log \log n}{\log n}\big)$
in \cite{HSR15} (it is open whether this bound can be improved to the information theoretic limit
$O\big(\frac{n}{\log n}\big)$). Moreover, also the navigation problem for
top dags is studied in~\cite{BilleGLW13}. The authors show that
a single navigation step in $t$ can be done in time $O(\log |t|)$ in the top dag. Nodes
are represented by their preorder numbers, which need $O(\log |t|)$ bits.
In \cite{BLRSSW15} an analogous result has been shown for unranked trees that are represented
by a string SLP for the balanced bracket representation of the tree. This covers also TSLPs: From a
TSLP $\dA$ one can easily compute in linear time an SLP for the balanced bracket representation
of $\val(\dA)$. In some sense our results are orthogonal to the results of  \cite{BLRSSW15}.
\begin{itemize}
\item We can navigate, determine node labels, and check equality of subtrees in time $O(1)$,
but our representation of tree nodes needs space $O(|\dA|)$.
\item Bille et al.~\cite{BLRSSW15} can navigate and do several other tree queries (e.g. lowest common ancestor 
computations) in time $O(\log |t|)$,
but their node representations (preorder numbers) only need space $O(\log |t|) \leq O(|\dA|)$.
\end{itemize}
An implementation of navigation over TSLP-compressed trees is given in~\cite{MaSe10}. Their worst-case time
per navigation step is $O(h)$ where $h$ is the height of the TSLP. The authors demonstrate
that on XML data full traversals are $5$--$7$ times slower than over succinct trees (based on an
implementation by Sadakane) while using $3$--$15$ times less space.

Checking equality of subtrees is trivial for minimal dags, since every subtree is uniquely represented.
For so called SL grammar-compressed dags (which can be seen as TSLPs with certain restrictions)
it was shown in \cite{MLMN13} that equality of subtrees can be checked in time $O(\log |t|)$ for given preorder numbers.

\section{Preliminaries}

For an alphabet $\Sigma$ we denote by $\Sigma^*$ the set of all
strings over $\Sigma$ including the empty string $\epsilon$.
For a string $w=a_1\cdots a_n$ ($a_i \in \Sigma$) we denote by $\alp(w)$ the
set of symbols $\{a_1,\dots,a_n\}$ occurring in $w$.
Moreover, let $|w|=n$, $w[i] = a_i$ and $w[i : j] = a_i \cdots a_j$
where $w[i : j] = \varepsilon$, if $i > j$. Let $w[:i] = w[1:i]$ and $w[i:] = w[i:n]$.

\section{Straight-Line Programs}
A \emph{straight-line program (SLP)}, is a triple
$\dB = (N,\Sigma, \rhs)$, where $N$ is a
finite set of \emph{nonterminals}, $\Sigma$ is a finite set of \emph{terminals} ($\Sigma \cap N = \emptyset$),
and $\rhs : N \to (N \cup \Sigma)^*$ is a mapping such that the binary relation
$\{ (A, B) \in N \times N \mid B \in \alp(\rhs(A)) \}$ is acyclic.
This condition ensures that every nonterminal $X \in N$ produces a unique string $\val_{\dB}(X) \in \Sigma^*$.
It is obtained from the string $X$ by repeatedly replacing nonterminals $A$ by $\rhs(A)$,
until no nonterminal occurs in the string. We will also write $A \to \alpha$ if $\rhs(A) = \alpha$ and call it a {\em rule} of $\dB$.
Usually, an SLP has a start nonterminal as well, but for our purpose, it is more convenient to
consider SLPs without a start nonterminal.

The size of the SLP $\dB$ is $|\dB| = \sum_{A \in N} |\rhs(A)|$, i.e., the total length of all right-hand sides.
A simple induction shows that for every SLP $\dB$ of size $m$ and every nonterminal $A$,
$|\val_{\dB}(A)|\in O(3^{m/3})$ \cite[proof of Lemma~1]{CLLLPPSS05}.
On the other hand, it is straightforward to define an SLP
$\dB$ of size $2n$ such that $|\val(\dB)| \geq 2^n$.
Hence, an SLP can be seen as a compressed representation of the string it generates,
and it can achieve exponential compression rates.

In Section~\ref{sec-equality} we will use some algorithmic facts about SLPs, which are collected in the
following proposition, see \cite{Loh12survey} for details:

\begin{proposition} \label{prop-SLP-alg}
There are polynomial time algorithms for the following problems:
\begin{itemize}
\item Given an SLP $\dB$ and a position $i$, compute the symbol $\val(\dB)[i]$.
\item Given an SLP $\dB$ and two positions $i < j$, compute an SLP for the string $\val(\dB)[i:j]$.
\item Given two SLPs $\dB_1$ and $\dB_2$, compute the length of the longest common prefix of $\val(\dB_1)$ and
$\val(\dB_2)$.
\end{itemize}
\end{proposition}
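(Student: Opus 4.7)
The plan for item~(1) is the standard random-access descent. First, in a bottom-up pass over the nonterminals of $\dB$, precompute and store $|\val_\dB(A)|$ for every $A$; since these lengths are bounded by $3^{|\dB|/3}$, each fits in $O(|\dB|)$ bits and all arithmetic remains polynomial. Given a query position $i$, I would then maintain a pair consisting of a current nonterminal $A$ and a residual position $p \leq |\val_\dB(A)|$. Reading $\rhs(A) = Y_1 \cdots Y_k$ and using the cached lengths, I locate the unique index $\ell$ with $\sum_{j<\ell} |\val_\dB(Y_j)| < p \leq \sum_{j\leq\ell}|\val_\dB(Y_j)|$, replace $(A,p)$ by $(Y_\ell, p-\sum_{j<\ell}|\val_\dB(Y_j)|)$ if $Y_\ell$ is a nonterminal, or output $Y_\ell$ if it is a terminal. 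Each recursion descends one edge in the derivation DAG, so only polynomially many steps occur.

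For item~(2), my plan is to perform the two descents (to $i$ and to $j$) in parallel and cut along them. For every nonterminal $A$ visited on either side together with its cut position $p$, I introduce a fresh nonterminal $A_{[:p]}$ (resp.\ $A_{[p:]}$) whose right-hand side keeps the symbols of $\rhs(A)$ strictly left (resp.\ right) of the cut and appends the appropriate recursive prefix (resp.\ suffix) nonterminal for the symbol straddling the cut; terminal symbols are handled trivially. Concatenating the suffix branch rooted at $i$, the untouched middle children at the lowest common ancestor nonterminal of the two descents, and the prefix branch rooted at $j$ yields an SLP for $\val(\dB)[i:j]$. Each descent has polynomial length, so only polynomially many new rules are added.

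For item~(3), the plan is to binary-search on $\ell$ using items~(1)--(2) together with Plandowski's polynomial-time equality test for SLP-compressed strings. The key monotonicity is that the predicate ``the length-$\ell$ prefixes of $\val(\dB_1)$ and $\val(\dB_2)$ coincide'' is downward closed in $\ell$, so the largest good $\ell \in [0, \min(|\val(\dB_1)|,|\val(\dB_2)|)]$ is exactly the length of the longest common prefix. In each round I extract the two length-$\ell$ prefixes via item~(2) and decide their equality via Plandowski's test; since the relevant lengths have $O(|\dB_1|+|\dB_2|)$ bits, both the number of rounds and the work per round are polynomial.

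The main obstacle among the three items is item~(3), since (1) and (2) are essentially direct inductions on the rule structure; what carries the burden in (3) is Plandowski's equality test, which is nontrivial but available as a black box from the survey~\cite{Loh12survey} cited in the statement.
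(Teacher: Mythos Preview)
The paper does not actually prove this proposition; it states it as a collection of known algorithmic facts and refers the reader to the survey~\cite{Loh12survey} for details. So there is no ``paper's own proof'' to compare against, only the cited literature.

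That said, your sketches are correct and match the standard arguments one finds in that literature. Item~(1) is the usual length-guided descent, and item~(2) is the standard prefix/suffix cutting along the two access paths; both are routine. For item~(3), your reduction to a logarithmic number of calls to Plandowski's SLP-equality test via binary search on the prefix length is a valid polynomial-time solution. In the literature one often sees a more direct approach (e.g., via fully compressed pattern matching or Je\.z's recompression), but your route is perfectly sound and arguably the cleanest way to derive item~(3) from available black boxes.
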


\section{Tree Straight-Line Programs}
For every $i \ge 0$, we fix a countable infinite set $\mc F_i$ of \emph{terminals} of rank $i$
and a countable infinite set $\mc N_i$ of \emph{nonterminals} of rank $i$. Let
$\mc F=\bigcup_{i\ge 0} \mc F_i$ and $\mc N=\bigcup_{i\ge 0} \mc N_i$.
Moreover, let $\mc X=\left\{x_1,x_2,\dots\right\}$ be the set of \emph{parameters}.
We assume that the three sets $\mc F$, $\mc N$, and $\mc X$ are pairwise disjoint.
A \emph{labeled tree} $t = (V,E,\lambda)$ is a finite, directed and ordered tree $t$
with set of nodes $V$, set of edges $E\subseteq V\times\nat\times V$,
and labeling function $\lambda:V\to\mc F\cup\mc N\cup \mc X$.
We require for every node $v\in V$ that if $\lambda(v) \in \mc F_k\cup \mc N_k$,
then $v$ has $k$ distinct children $u_1,\dots,u_k$, i.e.,
$(v,i,u)\in E$ if and only if $1 \leq i \leq k$ and $u=u_i$.
A leaf of $t$ is a node with zero children.
We require that every node $v$ with $\lambda(v) \in \mc X$ is a leaf of $t$.
The \emph{size} of $t$ is $|t|=|V|$.
We denote trees in their usual term notation, e.g. $a(b,c)$
denotes the tree with an $a$-labeled root node that has a first child
labeled $b$ and a second child labeled $c$.
We define $\mc T$ as the set of all labeled trees.
Let $\labels(t) = \{ \lambda(v) \mid  v \in V\}$. For $\mc L \subseteq \mc F\cup\mc N\cup \mc X$ we let
 $\mc T(\mc L) = \{ t \in \mc T \mid \labels(t) \subseteq \mc L\}$.
 The tree $t\in \mc T$ is \emph{linear} if there do not exist different leaves that are labeled with the same parameter.
We now define a particular form of context-free tree grammars (see \cite{tata97} for more
details on context-free tree grammars) with the property that exactly one tree is derived.
A \emph{tree straight-line program (TSLP)} is a triple $\mc G=(N, S, \rhs)$, where
$N \subseteq \mc N$ is a finite set of nonterminals,
$S \in \mc N_0 \cap N$
is the start nonterminal, and $\rhs : N \to \mc T(\mc F \cup N \cup \mc X)$ is a
mapping such that the following hold:
\begin{itemize}
\item For every $A \in N$, the tree $\rhs(A)$ is linear and if $A \in \mc N_k$ ($k \geq 0$) then
$\mc X \cap \labels(\rhs(A)) = \{ x_1, \ldots, x_k\}$.
 \item The binary relation $\{ (A,B) \in N \times N \mid  B \in  \labels(\rhs(A))\}$ is acyclic.
\end{itemize}
These conditions ensure that from every nonterminal $A \in N \cap \mc N_k$ exactly one linear tree
$\val_{\mc G}(A) \in \mc T(\mc F\cup\{x_1, \ldots, x_k\})$
is derived by using the rules $A \to \rhs(A)$ as rewrite rules in the usual sense.
More generally, for every tree $t \in \mc T(\mc F\cup\mc N \cup\{x_1, \ldots, x_n\})$
we can derive a unique tree $\val_{\mc G}(t) \in \mc T(\mc F\cup\{x_1, \ldots, x_n\})$
with the rules of $\mc G$.
The tree defined by $\mc G$ is
$\val(\mc G) = \val_{\mc G}(S)$.
Instead of giving a formal definition, we show a derivation of $\val(\mc G)$ from $S$ in an example:
\begin{example}\label{example:SLCF}
Let $\mc G = (\{ S,A,B,C,D,E,F\}, S, \rhs)$, $a\in\mc F_0$, and $b\in\mc F_2$, where
\begin{align*}
& \rhs(S) =  A(B),\;  \rhs(A) = C(F,x_1),\; \rhs(B) = E(F),\; \rhs(C) =  D(E(x_1),x_2),\\
& \rhs(D) = b(x_1,x_2),\; \rhs(E) = D(F,x_1),\; \rhs(F) = a.
\end{align*}
A possible derivation of $\val(\mc G) = b(b(a,a),b(a,a))$ from $S$ is:
\begin{align*}
S&\to A(B)\to C(F,B)\to D(E(F),B)\to b(E(F),B)\to b(D(F,F),B)\\
&\to b(b(F,F),B) \to
 b(b(a,F),B) \to b(b(a,a),B) \to b(b(a,a),E(F))\\
&\to b(b(a,a),D(F,F)) \to b(b(a,a),b(F,F))\to b(b(a,a),b(a,F))\\
& \to b(b(a,a),b(a,a)).
\end{align*}
\end{example}
As for SLPs, we will also write $A \to t$ if $\rhs(A) = t$.
The size $|\mc G|$ of a TSLP $\mc G=(N, S, \rhs)$ is defined as
$|\mc G|=\sum_{A\in N}|\rhs(A)|$ (recall that we do not count parameters).
For instance, the TSLP from Example~\ref{example:SLCF} has size $12$.

A TSLP $\mc G = (N, \rhs, S)$ is {\em monadic} if $N \subseteq \mc N_0 \cup \mc N_1$, i.e., every
nonterminal has rank at most 1.
The following result is shown in \cite{LoMaSS12}.

\begin{proposition} \label{thm-monadic}
From a given TSLP $\mc G$, where $r$ and $k$ are the maximal ranks of terminal and nonterminal symbols
appearing in a right-hand side, one can construct in
time $O(r \cdot k \cdot |\mc G|)$ a monadic TSLP $\mc H$ such that
$\val(\mc H) = \val(\mc G)$ and $|\mc H|\in O(r \cdot |\mc G|)$.
\end{proposition}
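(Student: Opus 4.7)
The plan is to eliminate every nonterminal of rank $\ge 2$ from $\mc G$ by a currying-style transformation, processing the nonterminals in reverse topological order of the dependency relation $\{(A,B)\in N\times N\mid B\in\labels(\rhs(A))\}$, so that when a rank-$k'$ nonterminal $A$ with $k'\ge 2$ is transformed, every nonterminal appearing in $\rhs(A)$ already has rank $\le 1$. I replace $A$ by a family of rank-$\le 1$ nonterminals whose cascade produces $\val_{\mc G}(A)$ as a tree context with $k'$ holes.

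For a fixed rank-$k'$ nonterminal $A$ with $\rhs(A)=t$, the tree $t$ is linear, so each parameter $x_1,\ldots,x_{k'}$ occurs exactly once as a leaf. Let $S$ be the minimal subtree of $t$ containing the root and all $k'$ parameter leaves; then $S$ has $k'$ leaves and at most $k'-1$ branching nodes (nodes with $\ge 2$ skeleton-children). I would cut $S$ at each branching node, producing $O(k')$ monadic fragments. For each monadic fragment I introduce a fresh rank-$1$ nonterminal whose right-hand side is that fragment together with the off-skeleton subtrees hanging from it (these carry no parameters and can be reused verbatim). Each call $A(s_1,\ldots,s_{k'})$ in some right-hand side of $\mc G$ is rewritten as a nested composition of these new rank-$\le 1$ nonterminals applied to the $s_i$'s; branching nodes of $S$ are realised at the call site by spelling out the corresponding terminal (of rank $\le r$) and its children, so that each ``side'' of the branch launches its own cascade.

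Correctness is by induction on the derivation: each new cascade rewrites to $t[x_1\gets s_1,\ldots,x_{k'}\gets s_{k'}]$ because the fragments reassemble to $t$ under parameter substitution. For the size bound, each internal node of $\rhs(A)$ appears in exactly one fragment (no duplication), and each call-site expansion adds $O(r)$ symbols per branching terminal node of $S$; summing gives $|\mc H|\in O(r\cdot|\mc G|)$. The time bound $O(r\cdot k\cdot|\mc G|)$ follows by spending $O(r\cdot k)$ work per node of $\mc G$, namely $O(k)$ to curry each nonterminal's argument list (length up to $k$) and $O(r)$ to emit each terminal with its up-to-$r$ children. The main obstacle I foresee is arranging the decomposition and the call-site rewriting so that each original node contributes only $O(r)$ rather than $O(r\cdot k)$ new material; the key is that $S$ has only $O(k')$ branching nodes, so at most $O(k')$ fresh nonterminals are introduced per original $A$, and each fragment reuses its portion of $t$ verbatim, preventing an extra factor of $k$ from sneaking into the size bound.
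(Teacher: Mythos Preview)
The paper does not prove this proposition; it simply cites \cite{LoMaSS12}. So there is no in-paper argument to compare against, and I evaluate your proposal on its own merits.

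Your skeleton decomposition is a sound strategy, but there is one concrete gap. A branching node $v$ of the skeleton $S$ (labelled by some terminal $f$ of arity $\rho\le r$) has, besides its $\ge 2$ skeleton-children, possibly up to $\rho-2$ off-skeleton children: parameter-free subtrees of $\rhs(A)$ hanging directly from $v$. You absorb the off-skeleton subtrees hanging from \emph{path} nodes of $S$ into the monadic fragments, but you never say what happens to the off-skeleton children of the branching nodes themselves; they belong to no fragment. If ``spelling out $f$ and its children'' at each call site means writing those subtrees verbatim there, you duplicate them at every occurrence of $A$, and the size bound fails: take $A(x_1,x_2)\to f(x_1,x_2,T)$ with $|T|=m$ and $p$ many rules $B_i\to A(a,a)$ gathered under the start symbol via a binary chain (so $r=3$, $k=2$); copying $T$ yields size $\Theta(pm)$ while $r\cdot|\mc G|=\Theta(m+p)$. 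The easy fix is to introduce one fresh rank-$0$ nonterminal per off-skeleton child of each branching node and reference that nonterminal at the call site, but this must be said explicitly and accounted for.

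Your size justification is also too terse in two places. First, ``summing gives $|\mc H|\in O(r\cdot|\mc G|)$'' hides the needed observation that $\sum_A (\text{occurrences of }A)\cdot\mathrm{rank}(A)$ is bounded by the total number of edges in all right-hand sides and hence by $|\mc G|$; this is what makes the aggregate call-site contribution $O(r\cdot|\mc G|)$ rather than $O(r\cdot k\cdot|\mc G|)$. Second, because you process bottom-up and rewrite callers' right-hand sides as you go, when you later fragment some $B$ you are working on an already enlarged $\rhs(B)$; you should remark that this does not compound, which holds because the call-site expansion you emit for $B$ has size $O(r\cdot\mathrm{rank}(B))$ regardless of $|\rhs(B)|$, while the fragments of $B$ together have size at most the current $|\rhs(B)|\le O(r)\cdot|\rhs(B)|_{\mathrm{orig}}$.
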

Finally, we need the following algorithmic result from \cite{BuLoMa07}:

\begin{proposition} \label{prop-TSLP-eq}
For two TSLPs $\dA_1$ and $\dA_2$ we can check in polynomial time whether $\val(\dA_1) = \val(\dA_2)$.
\end{proposition}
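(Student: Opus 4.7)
The plan is to reduce TSLP equivalence to SLP equivalence of a preorder encoding of the two trees. Define $\mathrm{pre}(f(t_1,\dots,t_k)) = f\cdot\mathrm{pre}(t_1)\cdots\mathrm{pre}(t_k)$. Since the ranks of the symbols in $\mc F$ are fixed, $\mathrm{pre}$ is injective on $\mc T(\mc F)$, so $\val(\dA_1) = \val(\dA_2)$ iff $\mathrm{pre}(\val(\dA_1)) = \mathrm{pre}(\val(\dA_2))$. The strategy is thus: (a) put both TSLPs into a convenient shape, (b) convert each into an SLP for its preorder string, (c) compare the resulting SLPs using Proposition~\ref{prop-SLP-alg}.

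First, by Proposition~\ref{thm-monadic} I may assume in polynomial time that both $\dA_1$ and $\dA_2$ are monadic, so every nonterminal has rank $0$ or $1$. Next, from a monadic TSLP $\dA = (N,S,\rhs)$ I would construct in linear time an SLP $\dB$ with $\val(\dB) = \mathrm{pre}(\val(\dA))$ as follows. For each rank-$0$ nonterminal $A$, introduce a single SLP nonterminal $A'$ with $\val(A') = \mathrm{pre}(\val_\dA(A))$; for each rank-$1$ nonterminal $A$, introduce two SLP nonterminals $A'_L,A'_R$ satisfying $\mathrm{pre}(\val_\dA(A)[x_1 := s]) = \val(A'_L)\cdot\mathrm{pre}(s)\cdot\val(A'_R)$ for every tree $s$. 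Processing nonterminals of $\dA$ in reverse topological order, the SLP right-hand sides are read off by walking $\rhs(A)$ in preorder: a terminal $f$ contributes $f$; an occurrence of a rank-$0$ nonterminal $B$ contributes $B'$; an occurrence $B(u)$ of a rank-$1$ nonterminal $B$ contributes $B'_L$, then the recursive translation of $u$, then $B'_R$. For a rank-$1$ $A$, the split between the right-hand sides of $A'_L$ and $A'_R$ is made at the unique $x_1$-leaf in $\rhs(A)$. The resulting SLP has size $O(|\dA|)$, and an easy induction on the topological order verifies correctness.

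Finally, given the SLPs $\dB_1, \dB_2$ with $\val(\dB_i) = \mathrm{pre}(\val(\dA_i))$, I would decide $\val(\dB_1) = \val(\dB_2)$ using Proposition~\ref{prop-SLP-alg}: compute $|\val(\dB_1)|$ and $|\val(\dB_2)|$ by a bottom-up induction on each grammar, compute the length $\ell$ of the longest common prefix in polynomial time, and accept iff $|\val(\dB_1)| = |\val(\dB_2)| = \ell$. The deepest ingredient is the polynomial-time longest common prefix computation for two SLPs (Plandowski's theorem), which is exactly what Proposition~\ref{prop-SLP-alg} supplies; the remaining steps are a routine syntactic translation, made clean by first passing to the monadic form so that every nonterminal has at most one parameter and hence contributes at most two SLP segments.
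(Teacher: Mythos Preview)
The paper does not supply its own proof of this proposition; it is stated as a known result and attributed to \cite{BuLoMa07}. Your argument is correct and is in fact the standard route to this result: encode ranked trees by their preorder word (injective because ranks are fixed), convert each TSLP into an SLP for that word, and test equality of the two SLPs via Proposition~\ref{prop-SLP-alg}. The key structural observation---that for a rank-$1$ nonterminal $A$ the preorder of $\val_{\dA}(A)[x_1{:=}s]$ factors as a fixed prefix, then $\mathrm{pre}(s)$, then a fixed suffix---is exactly what makes the translation into two SLP nonterminals $A'_L,A'_R$ work, and your inductive construction of their right-hand sides from $\rhs(A)$ is sound. The preliminary pass to monadic form via Proposition~\ref{thm-monadic} is not strictly required (a rank-$k$ nonterminal could be split into $k{+}1$ string segments directly), but it does make the bookkeeping cleaner, as you point out. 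The final equality test using lengths and the longest common prefix from Proposition~\ref{prop-SLP-alg} is a valid way to decide $\val(\dB_1)=\val(\dB_2)$ in polynomial time.
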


\section{Computational model} \label{sec-comp-model}

We use the word RAM model in the following sections, where registers have a certain bit length $w$.
Arithmetic operations and comparisons of registers can be done in time $O(1)$.
The space of a data structure is measured by the number of registers. Our algorithms
need the following register lengths $w$, where $\dA$ is the input TSLP and $t = \val(\dA)$.
\begin{itemize}
\item For navigation (Section~\ref{sec-TSLP-trav}) we need a bit length of $w = O(\log |\dA|)$,
since we only have to store numbers of length at most $|\dA|$.
\item For equality checks  (Section~\ref{sec-equality})  we need a bit length of $w = O(\log |t|) \leq O(|\dA|)$, which is the same
assumption as in \cite{BilleGLW13,BLRSSW15}.
\end{itemize}

\section{Two-Way Traversal in SLP-Compressed Strings} \label{sec-string-trav}

In~\cite{GasieniecKPS05} the authors present a data structure of size $O(|\dB|)$ for storing an SLP~$\dB$
that allows to produce $\val_{\dB}(A)$ with time delay of $O(1)$ per symbol. That is, the symbols of
$\val_{\dB}(A)$ are produced from left to right and for each symbol constant time is needed.

In a first step, we enhance the data structure from \cite{GasieniecKPS05} for traversing SLP-compressed strings
in such a way that both operations of moving to the left and right symbol are supported in constant time.
For this, we assume that the SLP $\dB = (N,\Sigma,\rhs)$ has the property that
$|\rhs(A)|=2$ for each $A\in N$.
Every SLP $\dB$ with $|\val(\dB)| \geq 2$
can easily be transformed in linear time into an SLP $\dB'$ with this property and
$\val(\dB')=\val(\dB)$:
First, we replace all occurrences of nonterminals $B$ with $|\rhs(B)| \leq 2$ by $\rhs(B)$.
If $\rhs(S) = A \in N$, then we redefine $\rhs(S) = \rhs(A)$.
Finally, for every $A \in N$ such that $\rhs(A)=\alpha_1\cdots \alpha_n$ with $n\geq 3$
and $\alpha_1,\dots,\alpha_n\in (N\cup\Sigma)$ we introduce new nonterminals $A_2,\dots, A_{n-1}$
with rules $A_i\to \alpha_iA_{i+1}$ for $2\leq i\leq n-2$, and $A_{n-1}\to \alpha_{n-1}\alpha_n$, and
redefine $\rhs(A)=\alpha_1A_2$.
It should be clear that $|\dB'| \leq 2 \cdot |\dB|$.

Note that the positions in $\val_{\dB}(X)$
correspond to root-leaf paths in the (binary) derivation tree of $\dB$ that is rooted in the nonterminal $X$.
We represent such a path by merging successive edges where the path moves in the same direction
(left or right) towards the leaf.
To formalize this idea, we define for every $\alpha\in N \cup \Sigma$ the strings $L(\alpha), R(\alpha) \in N^*\Sigma$
inductively as follows:  For $a \in \Sigma$ let
\begin{gather*}
L(a)  =  R(a) = a.
\end{gather*}
For $A \in N$ with $\rhs(A) = \alpha\beta$ ($\alpha, \beta \in N \cup \Sigma$) let
\begin{equation} \label{L-and-R}
L(A)  =  A \, L(\alpha) \text{ and }
R(A)  = A \, R(\beta) .
\end{equation}
Note that for every $A \in N$, the string $L(A)$ has the form
$A_1 A_2 \cdots A_n a$ with $A_i \in N$, $A_1 = A$, and $a \in \Sigma$.
We define $\omega_L(A) = a$. \label{page-omega-L} The terminal $\omega_R(A)$ is defined
analogously by referring to the string $R(A)$.

\begin{figure}[t]
\begin{center}
\begin{forest}
for tree={edge=->},
phantom[
  [$a$ [$C$ [$B$ [$A$[ $S$]]]] [$D$]]
  [$b$]
]
\end{forest}
\qquad
\begin{forest}
for tree={edge=->},
phantom[
  [$a$]
  [$b$ [$D$ [$C$ [$A$] [$D$ [$S$]]]]]
]
\end{forest}
\end{center}
\caption{\label{fig-ex-string} The tries $T_L(a)$, $T_L(b)$ (left), and $T_R(a)$, $T_R(b)$ (right) for the SLP from Example~\ref{ex-traversal-string}.}
\end{figure}
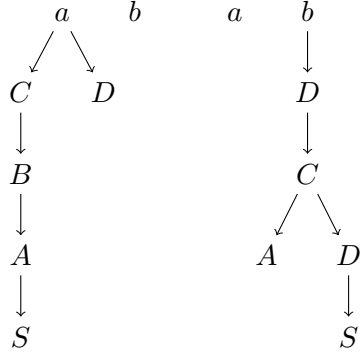

\begin{example} \label{ex-traversal-string}
Let $\mc H = (\{S,A,B,C,D\}, \{a,b\}, \rhs)$, where
$$
S \to AB, \  A \to BC, \ B \to CC, \ C \to aD, \ D \to ab .
$$
Then we have $L(S) = SABCa$, $L(A) = ABCa$, $L(B)=BCa$, $L(C) = Ca$, $L(D)=Da$,
$R(S)=SBCDb$, $R(A)=ACDb$, $R(B)=BCDb$, $R(C)=CDb$, $R(D)=Db$.
Moreover, $\omega_L(X) = a$ and $\omega_R(X) = b$ for all $X \in \{S,A,B,C,D\}$.
\end{example}
We store all strings $L(A)$ (for $A \in N$) in $|\Sigma|$ many
tries: Fix $a \in \Sigma$ and let $w_1, \ldots, w_n$ be all strings
$L(A)$ such that $\omega_L(A) = a$. Let $v_i$ be the string $w_i$ reversed.
Then, $a, v_1, \ldots, v_n$ is a prefix-closed set of strings (except that the empty
string is missing) that can be stored
in a trie $T_L(a)$. Formally, the nodes of $T_L(a)$ are the strings $a, v_1, \ldots, v_n$, where each node is
labeled by its last symbol (so the root is labeled with $a$),
and  there is an edge from $a w$ to $a w A$ for all appropriate $w \in N^*$ and $A \in N$.
The tries $T_R(a)$ are defined in the same way by referring to the strings $R(A)$.
Note that the total number of nodes in all tries $T_L(a)$ ($a \in \Sigma$) is exactly $|N|+|\Sigma|$.
In fact, every $\alpha \in N \cup \Sigma$ occurs exactly once as a node label in the
forest $\{ T_L(a) \mid a \in \Sigma\}$.

\begin{example}[Example~\ref{ex-traversal-string} continued]
The tries $T_L(a)$, $T_L(b)$, $T_R(a)$, and $T_R(b)$ for the SLP from Example~\ref{ex-traversal-string} are shown
in Figure~\ref{fig-ex-string}.
\end{example}
Next, we define two alphabets $L$ and $R$ as follows:
\begin{align}
L =& \{ (A,\ell,\alpha) \mid \alpha \in \alp(L(A)) \setminus \{A\} \} \label{alph-L}  \\
R =& \{ (A,r,\beta) \mid \beta \in \alp(R(A)) \setminus \{A\} \} \label{alph-R}
\end{align}
Note that the sizes of these alphabets are quadratic in the size
of $\mc H$. On the alphabets $L$ and $R$ we define the
partial operations $\reduce_L : L \to L$ and $\reduce_R : R \to R$ as follows:
Let $(A,\ell,\alpha) \in L$. Hence we can write $L(A)$ as $A u \alpha v$ for some strings $u$ and $v$.
If $u = \varepsilon$, then $\reduce_L(A,\ell,\alpha)$ is undefined. Otherwise,
we can write $u$ as $u' B$ for some $B \in N$.
Then we define $\reduce_L(A,\ell,\alpha) = (A,\ell,B)$.
The definition of $\reduce_R$ is analogous: If  $(A,r,\alpha) \in R$,
then we can write $R(A)$ as $A u \alpha v$ for some strings $u$ and $v$.
If $u = \varepsilon$, then $\reduce_R(A,r,\alpha)$ is undefined. Otherwise,
we can write $u$ as $u' B$ for some $B \in N$ and define
$\reduce_R(A,r,\alpha) = (A,r,B)$.

\begin{example}[Example~\ref{ex-traversal-string} continued]
The sets $L$ and $R$ are:
\begin{align*}
L = \{ & (S,\ell,A), (S, \ell,B), (S,\ell,C), (S,\ell,a), (A,\ell,B), \\
& (A,\ell,C), (A,\ell,a), (B,\ell,C), (B,\ell,a), (C,\ell,a), (D,\ell,a) \} \\
R = \{ & (S,r,B), (S,r,C), (S,r,D), (S,r,b), (A,r,C), (A,r,D), \\
& (A,r,b), (B,r,C), (B,r,D), (B,r,b), (C,r,D), (C,r,b),  (D,r,b) \}.
\end{align*}
For instance, $\reduce_L(S,\ell,a) = (S,\ell,C)$  and $\reduce_R(B,r,D) = (B,r,C)$
whereas $\reduce_L(S,\ell,A)$ is undefined.
\end{example}
An element $(A,\ell,\alpha)$ can be represented by a pair $(v_1, v_2)$ of different nodes in the forest
$\{ T_L(a) \mid a \in \Sigma \}$, where $v_1$ (resp. $v_2$) is the unique node labeled with
$\alpha$ (resp., $A$). Note that $v_1$ and $v_2$ belong to the same trie and that $v_2$ is below
$v_1$. This observation allows us to reduce the computation of the mapping $\reduce_L$ to
a so-called \emph{next link query}:
From the pair $(v_1, v_2)$ we have to compute the unique child $v$ of $v_1$
such that $v$ is on the path from $v_1$ to $v_2$. If $v$ is labeled with $B$, then
$\reduce_L(A,\ell,\alpha) = (A,\ell,B)$, which is represented by the pair $(v,v_2)$.
Clearly, the same remark applies to the map $\reduce_R$.
The following result is mentioned in~\cite{GasieniecKPS05}, see Section~\ref{conc} for
a discussion.

\begin{proposition}\label{prop:gas}
A trie $T$ can be represented in space $O(|T|)$ such that any next link query can be answered in time
$O(1)$. Moreover, this representation can be computed in time $O(|T|)$ from $T$.
\end{proposition}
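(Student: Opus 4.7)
The plan is to reduce next link queries on the trie $T$ to level ancestor queries, and then invoke a known linear-time preprocessing, constant-time query data structure for the level ancestor problem. Recall a next link query is given two nodes $v_1, v_2$ of $T$ with $v_2$ strictly below $v_1$, and must return the unique child $v$ of $v_1$ that lies on the path from $v_1$ down to $v_2$. That child is precisely the ancestor of $v_2$ at depth $\mathsf{depth}(v_1) + 1$, where $\mathsf{depth}$ denotes the distance from the root.

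Concretely, I would first run a DFS over $T$ to compute and store $\mathsf{depth}(u)$ for every node $u$; this takes time and space $O(|T|)$. Then I would preprocess $T$ with a level ancestor data structure $\mathsf{LA}(\cdot,\cdot)$ so that $\mathsf{LA}(u, d)$, the unique ancestor of $u$ at depth $d \le \mathsf{depth}(u)$, can be reported in constant time. Several such data structures with $O(|T|)$ construction time and $O(|T|)$ space are known; for instance the Bender--Farach-Colton construction (combining micro-trees, the ladder decomposition, and an $O(1)$ RMQ/LCA building block) or the earlier Berkman--Vishkin solution. With these tools in place, answering a next link query on $(v_1,v_2)$ just amounts to returning $\mathsf{LA}(v_2, \mathsf{depth}(v_1)+1)$, which is a single level ancestor query and hence runs in time $O(1)$.

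Correctness of the reduction is immediate from the definition of depth and from the fact that $v_2$ is a proper descendant of $v_1$, so the required ancestor at depth $\mathsf{depth}(v_1)+1$ exists and is by unicity of paths in a tree precisely the claimed child $v$ of $v_1$.

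The only real obstacle is the level ancestor data structure itself; its linear-time construction is technically involved, but since it is a well-established result in the word-RAM model, I would invoke it as a black box rather than reconstruct it. Under the register length $w = \Omega(\log |T|)$ available from our computational model, all indices, depths, and pointers fit in $O(1)$ registers, so both the $O(|T|)$ space bound and the $O(1)$ query time translate directly into the statement of the proposition.
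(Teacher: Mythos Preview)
Your reduction to level-ancestor queries is correct and is in fact one of the solutions the paper itself points to. Note, however, that the paper does not give its own proof of this proposition: it is quoted as a result from~\cite{GasieniecKPS05}, and the discussion in Section~\ref{conc} lists several interchangeable implementations. Besides the level-ancestor route you chose, the paper mentions (i) the original approach of~\cite{GasieniecKPS05} via a variant of the Schieber--Vishkin LCA algorithm, and (ii) an alternative that passes to the first-child/next-sibling encoding of the trie and observes that the next link of $v_1$ above $v_2$ equals the lowest common ancestor of $v_2$ and the last child of $v_1$ in the original tree, so one can use an $O(1)$ LCA structure such as Bender--Farach-Colton~\cite{BenderF00}. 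All three approaches yield the same bounds; your level-ancestor argument has the virtue of being the most direct reduction, while the LCA-based variants may be attractive in practice because highly tuned LCA implementations are widely available.
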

We represent a path in the derivation tree of $\dB$ with root $X$ by a sequence of triples
$$
\gamma = (A_1,d_1, A_2) (A_2,d_2, A_3) \cdots (A_{n-1}, d_{n-1},A_n) (A_n,d_n,a) \in (L \cup R)^+
$$
such that $n \geq 1$ and the following properties hold:
\begin{itemize}
\item  $A_1 = X$, $a \in \Sigma$
\item  $d_i = \ell$ if and only if $d_{i+1} =r$ for all $1 \leq i \leq n-1$.
\end{itemize}
We call such a sequence a {\em valid $X$-sequence for $\dB$} in the following,
or briefly a valid sequence if $X$ is not important and $\dB$ is clear from the context.
Note that a valid $X$-sequence $\gamma$
indeed defines a unique path in the derivation tree rooted at $X$ that ends in a leaf
that is labelled with the terminal symbol $a$ if $\gamma$ ends
with $(A,d,a)$. This path, in turn, defines a unique position in the string $\val_{\dB}(X)$
that we denote by $\pos(\gamma)$.

\begin{algorithm}[t]
\SetKwComment{Comment}{(}{)}
\KwData{valid sequence $\gamma$}
$(A,d,a) := \text{pop}(\gamma)$ \;
\eIf{$d = \ell$}
  {expand-right$(\gamma, A,a)$}
  {\eIf{$\gamma = \varepsilon$}
     {\Return undefined}
     {$(A',\ell,A) := \text{pop}(\gamma)$ \;
      $\gamma :=$ expand-right$(\gamma, A',A)$}}
  \Return$\gamma$
\caption{right$(\gamma)$  \label{right}}
\end{algorithm}

\begin{algorithm}[t]
\SetKwComment{Comment}{(}{)}
\KwData{sequence $\gamma$, $A \in N$, $\alpha \in N \cup \Sigma$ such that
$\gamma (A,\ell, \alpha)$ is a prefix of a valid sequence}
let $\rhs(A) = \alpha_1 \alpha_2$\;
\eIf{$\alpha \neq \alpha_1$}
  {$(A,\ell,B) := \reduce_L(A,\ell,\alpha)$ \;
   push$(\gamma, (A,\ell,B))$ \;
   let $\rhs(B) = \beta_1 \beta_2$\;
   push$(\gamma, (B,r,\beta_2))$\;
   \If{$\beta_2 \in N$}{push$(\gamma,  (\beta_2,\ell,\omega_L(\beta_2)))$}}
  {\eIf{$\gamma = \varepsilon$}
     {push$(\gamma, (A,r,\alpha_2))$}
     {$(B,r,A) := \text{pop}(\gamma)$ \;
      push$(\gamma, (B,r,\alpha_2))$}
       \If{$\alpha_2\in N$}{push$(\gamma, (\alpha_2,\ell,\omega_L(\alpha_2)))$}}
    \Return$\gamma$
\caption{expand-right$(\gamma, A,\alpha)$  \label{exp-right}}
\end{algorithm}

We now define a procedure $\mathsf{right}$ (see Algorithm~\ref{right}) that
takes as input a valid $X$-sequence $\gamma$ and returns a valid $X$-sequence $\gamma'$
such that $\pos(\gamma') = \pos(\gamma)+1$ in case the latter
is defined (and otherwise returns ``undefined'').
The idea uses the obvious fact that in order to move in a binary tree from a leaf to the next leaf
(where ``next'' refers to the natural left-to-right order on the leaves)
one has to repeatedly move to parent nodes as long as right-child edges are traversed
(in the opposite direction);
when this is no longer possible, the current node is the left child of its parent $p$.
One now moves to the right child of $p$ and from here repeatedly to left-children until a leaf is reached.
Each of these four operations can be implemented in constant time on valid sequences,
using the fact that consecutive edges to left (resp., right)
children are merged into a single triple from $L$ (resp., $R$) in our representation of paths.
We use the valid sequence $\gamma$ as a stack with the operations pop (which returns the right-most
triple of $\gamma$, which is thereby removed from $\gamma$) and push (which appends a given triple
on the right end of $\gamma$).
The procedure $\mathsf{right}$ uses the procedure expand-right (see Algorithm~\ref{exp-right})
that takes as input a (non-valid) sequence $\gamma$ of triples,
$A \in N$, and a symbol $\alpha \in N \cup \Sigma$
such that $\gamma (A,\ell, \alpha)$ is a prefix of a valid sequence. The sequence $\gamma$ has to be treated
as a global variable in order to obtain
an $O(1)$-time implementation (to make the presentation clearer, we pass $\gamma$
as a parameter to expand-right).

In a completely analogous way we can define a procedure $\mathsf{left}$ that takes as input a valid $X$-sequence $\gamma$
and returns a valid $X$-sequence $\gamma'$ such that $\pos(\gamma') = \pos(\gamma)-1$ in case the latter
is defined (otherwise ``undefined'' will be returned). The details are left to the reader.

\section{Traversal in  TSLP-Compressed Trees} \label{sec-TSLP-trav}

In this section, we extend the traversal algorithm from the previous section from SLPs to TSLPs.
We only consider monadic TSLPs. If the TSLP is not monadic, then we can transform it into a monadic
TSLP using Proposition~\ref{thm-monadic}.

Let us fix a {\em monadic} TSLP $\dA = (N, \rhs, S)$.
One can easily modify $\dA$ so that for all $A \in N$, $\rhs(A)$ has one of the following
four forms (we write $x$ for the parameter $x_1$):
\begin{enumerate}[(a)]
\item $B(C)$ for $B, C \in N$ (and $A$ has rank $0$)
\item $B(C(x))$ for $B, C \in N$ (and $A$ has rank $1$)
\item $a \in \mc F_0$ (and $A$ has rank $0$) 
\item $f(A_1, \ldots, A_{i-1},x,A_{i+1},\ldots, A_n)$ for $A_1, \ldots, A_{i-1}, A_{i+1}, \ldots, A_n \in N$,
$f \in \mc F_n$, $n \geq 1$
 (and $A$ has rank $1$)
\end{enumerate}
This modification is done in a similar way as the transformation of SLPs at the
beginning of Section~\ref{sec-string-trav}:
Remove nonterminals $A \in \mc N_1$ with $\rhs(A)=x_1$ (resp., $\rhs(A) = B(x_1)$, $B \in \mc N$)
by replacing in all right-hand sides
every subtree $A(t)$ by $t$ (resp., $B(t)$) and iterate this as long
as possible. Similarly, we can eliminate nonterminals $A \in \mc N_0 \setminus \{S\}$ with $\rhs(A) \in \mc N$.
If $\rhs(S) = B \in \mc N_0$ then redefine $\rhs(S) := \rhs(B)$. Finally, right-hand sides
of size at least two are split top-down into new nonterminals with right-hand sides of the above types.
In case the right-hand side contains the parameter $x_1$ (in a unique position),
we decompose along the path to $x_1$.  For example, the rule $Z(x) \to h(f(A,a),f(A,g(x)),B(A))$
is replaced by $Z(x) \to C(D(x))$, $C(x) \to h(E,x,F)$, $D(x) \to G(H(x))$, $E\to G(J)$,
$F \to B(A)$, $G(x) \to f(A,x)$, $H(x) \to g(x)$, $J \to a$.
The resulting TSLP has size at most $2|\dA|$.

Let us write $N_{x}$ ($x \in \{a,b,c,d\}$) for the set of all nonterminals
whose right-hand side is of the above type (x).
Let $N_1 = N_a \cup N_b$ and  $N_2  = N_c \cup N_d$.
Note that if we start with a nonterminal $A \in N_a$ and then
replace nonterminals from $N_1$ by their right-hand sides repeatedly, we obtain a tree that consists
of nonterminals from $N_d$ followed by
a single nonterminal from $N_c$.
After replacing these nonterminals by their right-hand sides,
we obtain a caterpillar tree which is composed of right-hand sides
of the form (d) followed by a single constant from $\mc F_0$. Hence, there
is a unique path of terminal symbols
from $\mathcal F$, and we call this path the {\em spine path} of $A$.
All other nodes of the caterpillar tree are leaves and labeled with nonterminals of rank zero to which we can apply
again the TSLP rules.
The size of a caterpillar tree and therefore a spine path can be exponential in the size of the TSLP.

We define an SLP $\dB = (N_1, N_2, \rhs_1)$ as follows:
If $A \in N_1$ with $\rhs(A) = B(C)$ or  $\rhs(A) = B(C(x))$, then $\rhs_1(A) = BC$.
The triple alphabets $L$ and $R$ from \eqref{alph-L} and \eqref{alph-R}  refer to this SLP $\dB$.
Moreover, we define $M$ to be the set of all triples $(A,k,A_k)$ where $A \in N_d$, $\rhs(A) = f(A_1, \ldots, A_{i-1},x,A_{i+1},\ldots, A_n)$
and $k \in \{1,\ldots,n\} \setminus \{i\}$.

Note that the nodes of the tree $\val(\dA)$ can be identified with the nodes of $\dA$'s
derivation tree that are labeled with a nonterminal from $N_2$ (every nonterminal from $N_2$
has a unique occurrence of a terminal symbol on its right-hand side).

A {\em valid sequence} for $\dA$ is a sequence
$$
\gamma = (A_1,e_1, A_2) (A_2,e_2, A_3) \cdots (A_{n-1}, e_{n-1},A_n) (A_n,e_n,A_{n+1}) \in (L \cup R \cup M)^*
$$
(note that $e_1, \ldots, e_n \in \{\ell,r\} \uplus \mathbb{N}$) such that $n \geq 0$ and
the following hold:
\begin{itemize}
\item If $S \in N_a$ then $n \geq 1$.
\item If $n \geq 1$ then $A_1 = S$ and $A_{n+1} \in N_2$.
\item If $e_i, e_{i+1} \in \{\ell,r\}$ then $e_i = \ell$ if and only if $e_{i+1} = r$.
\end{itemize}
Such a valid sequence represents a path in the derivation of the TSLP $\dA$ from the root to an $N_2$-labelled  node, and
hence represents a node of the tree $\val(\dA)$. Note that in case $S \in N_c$ the empty sequence is valid
too and represents the unique node of the single-node tree $\val(\dA)$. Moreover, if the last triple $(A_n,e_n,A_{n+1})$ belongs to $M$,
then we must have $A_{n+1} \in N_c$, i.e., $\rhs(A_{n+1}) \in \mc F_0$.
Here is an example:
\begin{example} \label{ex-tree-traversal}
Consider the monadic TSLP $\dA$ with nonterminals $S$, $A$, $B$, $C$, $D$, $E$, $F$, $G$, $H$, $J$ and the following
rules
\begin{align*}
& S \to A(B),\
A \to C(D(x)),\
B \to C(E),\
C \to f(F,x),\
D \to f(x,F),  \\
& E \to D(F),\
F \to G(H),\
G \to J(J(x)),\
H \to a,\
J \to g(x).
\end{align*}
It produces the tree shown in Figure~\ref{fig-tree-ex}.
We have $N_1 = \{ S,A,B,E,F,G\}$ and $N_2 = \{ C,D,J\}$.
The SLP $\dB$ consists of the rules
$$
S \to AB, \ A \to CD, \ B \to CE, \, E \to DF, \ F \to GH, \ G \to JJ
$$
(the terminal symbols are $C,D,H,J$). The triple set $M$ is
$$
M = \{ (C,1,F), (D,2,F) \}.
$$
A valid sequence is for instance $(S,\ell,A) (A, r, D) (D, 2, F) (F,\ell, J)$.
It represents the circled $g$-labelled node in Figure~\ref{fig-tree-ex}.
\end{example}
From a valid sequence $\gamma$ we can clearly compute in time $O(1)$ the label of the tree node 
represented by $\gamma$. Let us denote this terminal symbol with $\mathsf{label}(\gamma)$:
If $\gamma$ ends with the triple $(A,e,B)$, then we have $B \in N_2$
and  $\mathsf{label}(\gamma)$ is the unique terminal symbol in $\rhs(B)$. If $\gamma = \varepsilon$, then 
$S \in N_c$ and $\mathsf{label}(\gamma)$ is the unique terminal symbol in $\rhs(S)$.

Using valid sequences of $\dA$ it is easy to do a single navigation step in constant time.
Let us fix a valid sequence $\gamma$.
We consider the following possible navigation steps:
Move to the parent node  (if it exists) and move to the $i$th child (if it exists).
Consider for instance the navigation to the $i$th child.
First we check whether $1 \leq i \leq r$, where $r$ is the rank of $\mathsf{label}(\gamma)$.
If this does not hold, then we return undefined. Otherwise, $\gamma$ does not represent a 
leaf of $\val(\dA)$ and therefore $\gamma$ can be neither empty nor
end with a triple from $M \cup (N \times \{\ell,r\} \times N_c)$. Let 
$\beta \neq \varepsilon$ be the maximal suffix of $\gamma$, which belongs to $(L \cup R)^*$.
Then $\beta$ represents a path in the derivation tree of the string SLP $\dB$ that is
rooted in a certain nonterminal $A \in N_a$ and that leads to a certain nonterminal $B \in N_d$.
This path corresponds to a node of $\val_{\dA}(A)$ that is located on the spine path of $A$.
We can now apply our SLP-navigation algorithms $\mathsf{left}$ and $\mathsf{right}$ to the sequence $\beta$
in order to move up or down on the spine path.  
More precisely, let $\beta$ end with the triple $(C,d,B)$ (we must have $d \in \{\ell,r\}$ and $B \in N_d$).  We can now distinguish the following
cases, where $f(A_1, \ldots, A_{j-1}, x, A_{j+1},\ldots, A_n)$ is the right-hand side of $B$:
\begin{enumerate}[(i)]
\item $i \neq j$: Then we obtain the $i$th child by appending to $\gamma$
the triple $(B,i,A_i) \in M$ followed by the path that represents the root of $A_i$ (which consists of at most one triple).
\item $i=j$: Then we obtain the $i$th
child of the current node by moving down on the spine path. Thus, we replace the suffix $\beta$ by
$\mathsf{right}(\beta)$. 
\end{enumerate}
Similar argument can be used to navigate to the parent node.

\begin{algorithm}[t]
\SetKwComment{Comment}{(}{)}
\KwData{valid sequence $\gamma$}
\If{$\gamma \in \{\varepsilon\} \cup L$}
  {\Return undefined}
\If{$\gamma$ belongs to $\alpha \cdot M \cdot L$ or $\alpha \cdot M$ for some $\alpha \in (L \cup R \cup M)^*$}
  {\Return $\alpha$ \Comment*[r]{Note that $\alpha$ must be a valid sequence.}}
let $\gamma = \alpha \cdot \beta$, where $\alpha \in  \{\varepsilon\} \cup (L \cup R \cup M)^* \cdot M$ and $\beta \in (L \cup R)^+$ \;
\Return $\alpha \cdot$ left$(\beta)$  \Comment*[r]{We have $\beta \not\in L$, hence left$(\beta) \neq$ undefined.}
\caption{parent$(\gamma)$  \label{parent}}
\end{algorithm}
\begin{algorithm}[btb]
\SetKwComment{Comment}{(}{)}
\KwData{valid sequence $\gamma$, number $i$}
\If{$i > r$, where $r$ is the rank of $\mathsf{label}(\gamma)$}
  {\Return undefined}
let $\gamma = \alpha \cdot \beta$, where   $\alpha \in  \{\varepsilon\} \cup (L \cup R \cup M)^* \cdot M$ and $\beta \in (L \cup R)^+$ \;
{let $\beta$ end with the triple $(C,d,B)$\Comment*[r]{We must have $B \in N_d$.}}
let $\rhs(B) = f(A_1, \ldots, A_{j-1},x_1,A_{j+1},\ldots, A_n)$ \;
\If{$i = j$}{\Return $\alpha \cdot \text{right}(\beta)$}
\Return $\alpha \cdot \beta \cdot (B, i, A_i) \cdot  \text{root}(A_i)$
\caption{child$(\gamma,i)$  \label{i-th-child}}
\end{algorithm}

Algorithm~\ref{parent} shows the pseudo code for moving to the parent node,
and Algorithm~\ref{i-th-child} shows the pseudo code for moving to the
$i$-th child. If this node does not exist, then  ``undefined'' is returned.
To make the code more readable we denote the concatenation operator for sequences of triples (or sets of triple sequences)
with ``$\cdot$''. We make use of the procedures $\mathsf{left}$
and $\mathsf{right}$ from Section~\ref{sec-string-trav}, applied
to the SLP $\dB$ derived from our TSLP $\dA$.  Note that in Algorithm~\ref{parent} we apply in the final
case the procedure
$\mathsf{left}$ to the maximal suffix $\beta$ from $(L \cup R)^+$
of the current valid sequence $\gamma$
(and similarly for Algorithm~\ref{i-th-child}). To provide an $O(1)$ time implementation
we do not copy the sequence $\beta$ and pass it to $\mathsf{left}$ (which is not possible 
in constant time) but apply $\mathsf{left}$
directly to $\gamma$.
The right-most triple from $M$ in $\gamma$ (if it exists) works as a left-end marker.
Algorithm~\ref{i-th-child} uses the procedure $\mathsf{root}(A)$ (with $A$ of rank $0$).
This procedure is not shown explicitly: It simply returns $\epsilon$ if $A\in N_2$,
and otherwise returns $(A,\ell,\omega_L(A))$ (recall $\omega_L$ from Page~\pageref{page-omega-L}).
Hence, it returns the representation of the root node of $\val_{\dA}(A)$.

The following theorem summarizes the results of this section:

\begin{theorem}
Given a monadic TSLP $\dA$ we can compute in linear time on a word RAM with register length $O(\log |\dA|)$ a data structure of size $O(|\dA|)$ that
allows to do the following computations in time $O(1)$, where $\gamma$ is a valid sequence that represents the tree node $v$: (1)~
Compute the valid sequence for the parent node of $v$, and (2) compute the valid sequence for the $i$th child of $v$.
\end{theorem}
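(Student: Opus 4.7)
The plan is to assemble the data structure in three layers. First, in linear time I would normalize $\dA$ so that every right-hand side has one of the four forms (a)--(d) described above the theorem statement; by the same argument used for SLPs at the start of Section~\ref{sec-string-trav}, this at most doubles the size and can be carried out by a single top-down decomposition of each right-hand side. This yields the partition $N = N_1 \cup N_2$ and the string SLP $\dB = (N_1, N_2, \rhs_1)$, of size $O(|\dA|)$ and with every right-hand side of length two, as required by Section~\ref{sec-string-trav}.

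Next, I would feed $\dB$ to the construction from Section~\ref{sec-string-trav}: build the tries $\{T_L(a), T_R(a) \mid a \in N_2\}$ by a single traversal of the DAG of $\dB$, install on each of them the next-link data structure of Proposition~\ref{prop:gas} (total space and construction time $O(|\dB|)$), and tabulate $\omega_L(A)$ and $\omega_R(A)$ for all $A \in N_1$ by a reverse-topological pass. This gives constant-time implementations of $\mathsf{left}$ and $\mathsf{right}$ on valid sequences for $\dB$, which is all that is needed to move along a spine. For the set $M$ I would store, for each $A \in N_d$ with $\rhs(A) = f(A_1,\ldots,A_{j-1},x,A_{j+1},\ldots,A_n)$, a size-$n$ array of the $A_k$ together with the parameter position $j$ and the rank $n$; the sum of these array sizes is bounded by $|\dA|$.

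To represent a valid sequence $\gamma$ I would use a stack of triples from $L \cup R \cup M$, augmented with a second stack holding pointers to the positions of the $M$-triples inside the first stack. Both stacks are maintained in $O(1)$ per push/pop. The top pointer of the auxiliary stack yields in $O(1)$ the decomposition $\gamma = \alpha \cdot \beta$ with $\beta \in (L \cup R)^+$ used in Algorithms~\ref{parent} and~\ref{i-th-child}. Given this, child$(\gamma,i)$ reduces in $O(1)$ either to a call to $\mathsf{right}$ applied to $\beta$ (case $i=j$) or to pushing the triple $(B,i,A_i) \in M$ obtained from the stored array and appending $\text{root}(A_i)$, which itself is $O(1)$ because $\omega_L$ is precomputed; parent$(\gamma)$ dually reduces to a pop of the top $M$-triple together with the preceding $L$-triple, or to a single call to $\mathsf{left}$ on $\beta$. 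In either operation the auxiliary pointer stack is updated in $O(1)$.

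The genuine obstacle is not the case analysis, which is already sketched above the theorem, but the bookkeeping needed to keep every auxiliary lookup within constant time: computing the rank of $\mathsf{label}(\gamma)$, locating the correct $A_i$ inside the right-hand side of a potentially large type-(d) nonterminal, recovering the decomposition $\gamma = \alpha \cdot \beta$, and invoking $\mathsf{root}$. The array representation of type-(d) rules, the auxiliary stack of $M$-pointers, and the precomputed $\omega_L$-table each handle one of these points. Correctness then reduces to two facts already established in the paper: valid sequences are in bijection with the root-to-node paths of $\val(\dA)$, and $\mathsf{left}$ and $\mathsf{right}$ correctly move between consecutive leaves of the derivation tree of $\dB$, i.e.\ between consecutive nodes on the same spine. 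All precomputation fits into time $O(|\dA|)$ with registers of $O(\log|\dA|)$ bits, which establishes the theorem.
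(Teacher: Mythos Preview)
Your proposal is correct and follows essentially the same approach as the paper: normalize to forms (a)--(d), extract the string SLP $\dB$, reuse the two-way traversal machinery of Section~\ref{sec-string-trav} for movement along spines, and handle branching off a spine via the triple set $M$ together with the $\mathsf{root}$ function. The only notable implementation difference is that you maintain an explicit auxiliary stack of pointers to the $M$-triples to recover the factorization $\gamma = \alpha\cdot\beta$ in $O(1)$, whereas the paper simply applies $\mathsf{left}$/$\mathsf{right}$ directly to $\gamma$ and lets the right-most $M$-triple act as a left-end marker (so the checks ``$\gamma=\varepsilon$'' inside those procedures are reinterpreted as ``top of stack is in $M$ or the stack is empty''); both realizations are equivalent and cost $O(1)$ per step.
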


 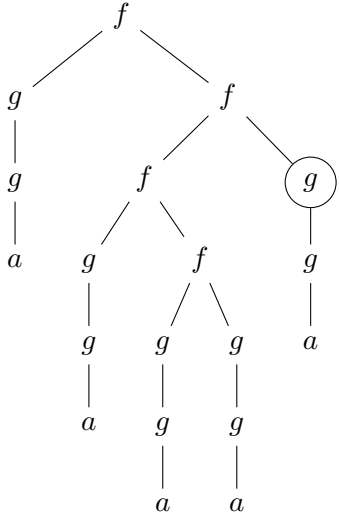
\begin{figure}[t]
\begin{center}
\begin{forest}
[ $f$,
  [ $g$ [ $g$ [ $a$ ]]]
  [ $f$
    [ $f$
      [ $g$ [ $g$ [ $a$ ]]]
      [ $f$
        [ $g$ [ $g$ [ $a$ ]]]
        [ $g$ [ $g$ [ $a$ ]]]
      ]
    ]
    [ $g$,circle,draw [$g$ [$a$ ]]]
  ]
]
\end{forest}
\end{center}
\caption{\label{fig-tree-ex} The tree produced from the TSLP in
Example~\ref{ex-tree-traversal}}
\end{figure}

 \section{Equality checks} \label{sec-equality}

Consider a monadic TSLP $\mc G = (N,\rhs, S)$, where again for every nonterminal $A \in N$, $\rhs(A)$ has one of the following four
forms:
\begin{enumerate}[(a)]
\item $B(C)$ for $B, C \in N$ (and $A$ has rank $0$)
\item $B(C(x))$ for $B, C \in N$ (and $A$ has rank $1$)
\item $a \in \mc F_0$ (and $A$ has rank $0$) 
\item $f(A_1, \ldots, A_{i-1},x,A_{i+1},\ldots, A_n)$ for $A_1, \ldots, A_{i-1}, A_{i+1}, \ldots, A_n \in N$,
$f \in \mc F_n$, $n \geq 1$
 (and $A$ has rank $1$)
\end{enumerate}
Let $t = \val(\dA)$ be the tree produced by $\mc G$.
The goal of this section is to extend the navigation algorithm from the previous section such that
for two nodes of $t$ (represented by valid sequences) we can test in $O(1)$ time whether the subtrees
rooted at the two nodes are equal.
Recall from Section~\ref{sec-comp-model} that we use the word RAM model with registers of length $O(\log |t|)$.
This is the same assumption that is also used in \cite{BilleGLW13,BLRSSW15}.
As before, the preprocessed data structure will have size $O(|\mc G|)$ and the query time will be $O(1)$.
But this time, the preprocessing time will be polynomial in the TSLP size
$|\mc G|$ and not just linear. It will be hard to reduce this to linear time preprocessing. The best known algorithm
for checking equality of SLP-compressed strings has quadratic complexity \cite{Jez15}, and from two SLPs $\dB_1$
and $\dB_2$ we can easily compute a TSLP for a tree $t$ whose root has two children in which $\val(\dB_1)$
and $\val(\dB_2)$ are rooted as linear chains.

We assume that $\mc G$ is {\em reduced} in the sense that $\val_{\dA}(A) \neq \val_{\dA}(B)$ for all
$A,B \in N$ with $A \neq B$.
Reducing the TSLP does not increase its size
and the reduction can be done in polynomial time (recall that we allow polynomial time preprocessing) by Proposition~\ref{prop-TSLP-eq}.
To motivate the forthcoming definitions, we first
give an example of two equal subtrees produced by a single TSLP.
\begin{example}  \label{ex-TSLP-s(A)}
Consider the TSLP $\dA$ with the following rules:
\begin{align*}
& S \to G(A), \ A \to a, \ B(x) \to f(A,x), \ C \to B(A), \ D(x) \to f(C,x), \\
& E \to D(C), \ F(x) \to f(x,E), \ G(x) \to H(I(x)), \ H(x) \to F(B(x)), \\
& I(x) \to D(B(x))
\end{align*}
The reader can check that this TSLP is reduced. The caterpillar tree of $S$ is
given in Figure~\ref{fig-tree-caterpillar}. The tree produced by the subtree
surrounded by a square is the same as the one produced by $E$.
\end{example}
 \begin{figure}[t]
\begin{center}
\begin{forest}
[ $f$
  [ $f$
    [ $A$ ]
    [ $f$,tikz={\node [draw,fit to tree]{};}
      [ $C$ ]
      [ $f$
        [ $A$ ]
        [ $a$ ]
      ]
    ]
  ]
  [ $E$ ]
]
\end{forest}
\end{center}
\caption{\label{fig-tree-caterpillar}
The caterpillar tree produced from $S$ of the TSLP in
Example~\ref{ex-TSLP-s(A)}}
\end{figure}
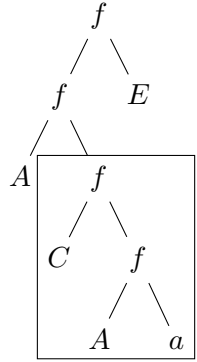
We use the notations introduced in the previous section.
For a string $$w = A_1 A_2 \cdots A_n A_{n+1} \in N_d^* N_c$$ we define
$$
\val_{\dA}(w) = \val_{\dA}(A_1(A_2(\cdots A_n(A_{n+1}) \cdots))).
$$
Recall the definition of the SLP $\dB = (N_1, N_2, \rhs_1)$ from Section~\ref{sec-TSLP-trav}:
If $A \in N_1$ with $\rhs(A) = B(C)$ or  $\rhs(A) = B(C(x))$, then $\rhs_1(A) = BC$.
So, for every $A \in N_a$
we have $\val_{\dB}(A) = A_1 A_2 \cdots A_n A_{n+1}$ for some $n \geq 1$, where
$A_i \in N_d$ for $1 \leq i \leq n$ and $A_{n+1} \in N_c$.
Let $\ell(A) = n+1$ (this is the length of the spine path of $A$), $A[i:j] = A_i A_{i+1} \cdots A_j$,
$A[i:] = A_i \cdots A_n A_{n+1}$, and $A[:i] = A_1  \cdots A_i$.
We define $s(A)$ as the smallest number $i \geq 2$ such that
$\val_{\dA}(A[i:]) = \val_{\dA}(B)$ for some
nonterminal $B \in N$ of rank zero. This unique nonterminal $B$ is denoted
with $A'$. Moreover, let $r_A(x) = \rhs(A_{s(A)-1})$ be the right-hand
side of $A_{s(A)-1} \in N_d$. Hence, $r_A(x)$ is a tree of the form
$f(X_1, \ldots, X_{i-1},x,X_{i+1},\ldots, X_m)$ for $X_1, \ldots, X_{i-1}, X_{i+1}, \ldots, X_m \in N$, $f \in \mc F_m$, $m \geq 1$.
With these notations, we have $\val_{\dA}(A[s(A)-1:]) = \val_{\dA}(r_A(A'))$.
Note that $s(A)$, $r_A$, and $A'$ are well-defined since
$\val_{\dA}(A[n+1:]) = \val_{\dA}(A_{n+1})$ and $A_{n+1}$ has rank zero.

\addtocounter{example}{-1}
\begin{example}[Continued]
The corresponding SLP $\dB$ has the following rules:
$$
S \to GA, \ C \to BA, \ E \to DC, \ G \to HI, \ H \to FB, \ I \to DB
$$
We have $\val_{\dB}(S) = FBDBA$. Moreover,
$\val_{\dA}(DBA) = f(f(a,a),f(a,a)) = \val_{\dA}(E)$, but
$\val_{\dA}(BDBA) = f(a,f(f(a,a),f(a,a)))$ is not equal to one of the trees $\val_{\dA}(X)$
for a nonterminal $X$ of rank zero. Hence, we have $S' = E$, $s(S) = 3$, and $r_S = \rhs(B) = f(A,x)$.
\end{example}

\begin{lemma} \label{lemma-precomp}
For every nonterminal $A \in N_a$, we can compute $s(A), r_A$ and $A'$ in polynomial time.
\end{lemma}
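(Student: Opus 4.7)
The plan is to avoid iterating over the (potentially exponential-length) spine by exploiting strict size monotonicity along it. Going from $A[i:]$ to $A[i-1:]$ wraps $\val_{\dA}(A[i:])$ in the context $\rhs(A_{i-1}) \in N_d$, whose root has arity $\geq 1$, so $|\val_{\dA}(A[i:])|$ is strictly decreasing in $i$ on $[1,\ell(A)]$. Consequently, for each rank-zero nonterminal $X$ of $\dA$ there is at most one index $i_X$ with $\val_{\dA}(A[i_X:]) = \val_{\dA}(X)$ (otherwise two suffixes would have equal tree size, contradicting strict monotonicity). The algorithm is therefore: for each of the at most $|N|$ rank-zero nonterminals $X$, decide whether such an $i_X$ exists and compute it; then $s(A) = \min\{i_X : i_X \geq 2\}$, $A'$ is the witnessing $X$, and $r_A = \rhs(A_{s(A)-1})$. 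The minimum is over a non-empty set because $i_X = \ell(A)$ is always witnessed by $X = A_{\ell(A)} \in N_c$ (and $\ell(A) \geq 2$ since $\rhs(A) = B(C)$).

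The key technical piece is a size oracle for $|\val_{\dA}(A[i:])|$ that enables binary search on $i \in [1, \ell(A)]$. Precompute $|\val_{\dA}(Y)|$ for every $Y \in N$ by bottom-up dynamic programming on $\dA$, using multi-precision arithmetic (sizes may be $2^{\Omega(|\dA|)}$ but fit in $O(|\dA|)$ bits, so each operation is polynomial-time). Define weights $w(Y) = 1 + \sum_{k} |\val_{\dA}(X_k)|$ for $Y \in N_d$ with $\rhs(Y) = f(X_1, \ldots, x_1, \ldots, X_m)$ (summed over the non-parameter children) and $w(Y) = |\val_{\dA}(Y)|$ for $Y \in N_c$, so that $|\val_{\dA}(A[i:])|$ equals the total weight of the string $\val_{\dB}(A)[i:]$. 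Given $i$ in binary (which needs only $O(|\dA|)$ bits), Proposition~\ref{prop-SLP-alg} produces an SLP for $\val_{\dB}(A)[i:]$ in polynomial time, whose weighted length is then computed by a standard bottom-up pass on that SLP. Binary searching for the unique $i$ with $|\val_{\dA}(A[i:])| = |\val_{\dA}(X)|$ thus takes $O(|\dA|)$ polynomial-time oracle calls.

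A size match is necessary but not sufficient, so every candidate $i$ returned by binary search must be verified by a tree equality check. I build a TSLP for $\val_{\dA}(A[i:])$ by taking the SLP $\dB'$ for $\val_{\dB}(A)[i:]$ and reinterpreting each rule $Y \to Y_1 Y_2$ as a tree rule $Y(x) \to Y_1(Y_2(x))$ or $Y \to Y_1(Y_2)$, depending on whether $\val_{\dB'}(Y)$ ends in $N_d$ (rank $1$) or in $N_c$ (rank $0$); combined with the $\dA$-rules for the symbols in $N_d \cup N_c$, this yields a TSLP of polynomial size evaluating to $\val_{\dA}(A[i:])$. Equality against $\dA$ started at $X$ is then decided in polynomial time by Proposition~\ref{prop-TSLP-eq}. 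Collecting the matching pairs $(i_X, X)$ determines $s(A)$ and $A'$; finally $A_{s(A)-1}$ is extracted as the single letter at position $s(A)-1$ of $\val_{\dB}(A)$ via Proposition~\ref{prop-SLP-alg}, giving $r_A$. The main obstacle throughout is precisely that $\ell(A)$ may be exponential in $|\dA|$, ruling out any direct scan of the spine; strict size monotonicity is the structural fact that makes binary search sound, and Propositions~\ref{prop-SLP-alg} and~\ref{prop-TSLP-eq} are what convert the idea into a polynomial-time algorithm.
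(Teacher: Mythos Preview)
Your proposal is correct and follows essentially the same approach as the paper's proof: enumerate rank-zero nonterminals, exploit strict size monotonicity along the spine to binary-search for the unique matching position, verify each candidate with the TSLP equality test (Proposition~\ref{prop-TSLP-eq}), and recover $r_A$ via symbol access at position $s(A)-1$ (Proposition~\ref{prop-SLP-alg}). You spell out more of the plumbing (the weight-based size oracle and the explicit TSLP construction for $\val_{\dA}(A[i:])$) than the paper does, but the underlying argument is the same.
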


\begin{proof}
We first compute the position $s(A)$ as follows. Let $B_1, \ldots, B_k$ be a list of all nonterminals of
rank zero. We can compute the size $n_i = |\val_{\dA}(B_i)|$ by a simple bottom-up computation.
Let us assume w.l.o.g.~that $n_1 \leq n_2 \leq  \cdots \leq n_k$.
Note that there can be only one position $1 \leq s_i \leq \ell(A)$ ($1 \leq i \leq k$) on the spine path of $A$ such that
the tree $\val_{\dA}(A[s_i:])$ has size $n_i$. This position (if it exists) can be found in polynomial time using binary search.
Note that for a given position $1 \leq s \leq \ell(A)$ we can compute the size of tree $\val_{\dA}(A[s:])$ in polynomial time
by first computing a TSLP for this tree (using the second statement in Proposition~\ref{prop-SLP-alg}) and then compute the size of the produced tree bottom-up.
Once the positions $s_1, \ldots, s_k$ are computed, we can compute in polynomial time TSLPs for the trees
$t_i = \val_{\dA}(A[s_i:])$. The position $s(A)$ is the smallest $s_i \geq 2$ such that $t_i$ is equal to one of the trees
$\val_{\dA}(B_j)$, and the latter can be checked in polynomial time by Proposition~\ref{prop-TSLP-eq}.
This also yields the nonterminal $A'$. Finally, the right-hand side $r_A$ can be computed
by computing in polynomial time the symbol $B \in N_2$ in $\val_{\dB}(A)$ at position $s(A)-1$ by the first statement of
Proposition~\ref{prop-SLP-alg}. Then, $r_A = \rhs(B)$.
\end{proof}

\begin{lemma} \label{lemma-eq}
For all $A, B \in N_a$ and all $1 \leq i < s(A)$, $1 \leq j < s(B)$, the following two conditions are equivalent:
\begin{enumerate}[(i)]
\item $\val_{\dA}(A[i:]) = \val_{\dA}(B[j:])$
\item $\val_{\dB}(A[i:s(A)-2]) = \val_{\dB}(B[j:s(B)-2])$ and $r_A(A') = r_B(B')$.
\end{enumerate}
\end{lemma}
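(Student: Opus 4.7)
The plan is to handle the two implications separately.

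For (ii) $\Rightarrow$ (i) the argument is by context substitution: by definition of $s(A)$ and $r_A$ we have $\val_{\dA}(A[s(A)-1:]) = \val_{\dA}(r_A(A'))$, so
$$\val_{\dA}(A[i:]) = \val_{\dA}\bigl(A_i(A_{i+1}(\cdots A_{s(A)-2}(r_A(A'))\cdots))\bigr),$$
and analogously for $B$. Under hypothesis (ii) both the outer string $A[i:s(A)-2]$ of rank-$1$ nonterminals and the inner tree $r_A(A')$ agree with their $B$-counterparts, so the two values coincide.

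For (i) $\Rightarrow$ (ii) I would proceed by induction on $p = s(A) - i \geq 1$. The basic move is to peel off the topmost spine nonterminal. Since $i < s(A) \leq \ell(A)$, the nonterminal $A_i$ lies in $N_d$, so its right-hand side has the form $f(X_1,\ldots,X_{k-1},x,X_{k+1},\ldots,X_m)$ for some terminal $f$; consequently $\val_{\dA}(A[i:])$ is rooted at $f$ with ground side subtrees $\val_{\dA}(X_r)$ at positions $r \ne k$ and the recursive spine value $\val_{\dA}(A[i+1:])$ at position $k$. The same decomposition applies to $B_j$. Equating the two trees forces the root labels, arities, and parameter positions to coincide; each side-branch equality $\val_{\dA}(X_r) = \val_{\dA}(Y_r)$ then upgrades to $X_r = Y_r$ by reducedness of $\mc G$, which in turn gives $\val_{\dA}(A_i) = \val_{\dA}(B_j)$ and hence $A_i = B_j$ (again by reducedness). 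One also obtains $\val_{\dA}(A[i+1:]) = \val_{\dA}(B[j+1:])$.

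The delicate part is to show that both sides reach the special position simultaneously, i.e., that $s(A) - i = s(B) - j$. If, say, $i+1 = s(A)$ but $j+1 < s(B)$, then $\val_{\dA}(A[i+1:]) = \val_{\dA}(A')$ is the value of a rank-$0$ nonterminal, hence so is $\val_{\dA}(B[j+1:])$; but $2 \leq j+1 < s(B)$ contradicts the minimality clause in the definition of $s(B)$. Thus $i+1 < s(A)$ iff $j+1 < s(B)$. In the base case $p=1$ (so $i = s(A)-1$) the peeling step forces $q := s(B) - j = 1$ and yields $A_{s(A)-1} = B_{s(B)-1}$ (so $r_A = r_B$) together with $\val_{\dA}(A') = \val_{\dA}(B')$, which gives $A' = B'$ by reducedness and therefore $r_A(A') = r_B(B')$; the string equality $A[i:s(A)-2] = B[j:s(B)-2]$ is vacuous. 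In the inductive step $p \geq 2$, applying the induction hypothesis to $(i+1, j+1)$ yields $A[i+1:s(A)-2] = B[j+1:s(B)-2]$ and $r_A(A') = r_B(B')$; prepending $A_i = B_j$ completes the string equality. The main obstacle is this lockstep alignment argument, which simultaneously uses reducedness (to pass from value equality to syntactic equality of nonterminals) and the precise minimality in the definition of $s(\cdot)$ (to prevent one side from matching a rank-$0$ nonterminal ahead of the other).
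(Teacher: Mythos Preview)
Your overall strategy---induction along the spine, using reducedness to pass from value equality to syntactic equality and using the minimality of $s(\cdot)$ to keep the two sides in lockstep---is precisely the paper's approach. However, the ``basic move'' paragraph contains an unjustified assertion that is in fact false: equating the two trees forces the root labels and arities to agree, but it does \emph{not} force the parameter positions $k$ and $l$ of $A_i$ and $B_j$ to coincide. Consequently your base-case conclusion ``$A_{s(A)-1}=B_{s(B)-1}$ (so $r_A=r_B$) together with $A'=B'$'' can fail.

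A concrete counterexample: take $Z\to a$, $W\to b$, $C\to f(Z,x)$, $D\to f(x,Z)$, $G\to g(x,W)$, $H\to g(W,x)$, $A_1\to G(C(x))$, $B_1\to H(D(x))$, $A\to A_1(Z)$, $B\to B_1(Z)$ (plus any start symbol above $A,B$). This TSLP is reduced. The spines are $\val_{\dB}(A)=GCZ$ and $\val_{\dB}(B)=HDZ$; one checks $s(A)=s(B)=3$, $A'=B'=Z$, $r_A=f(Z,x)$, $r_B=f(x,Z)$. With $i=j=2$ we have $\val_{\dA}(A[2:])=f(a,a)=\val_{\dA}(B[2:])$, yet $A_2=C\neq D=B_2$ and $r_A\neq r_B$. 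Still $r_A(A')=f(Z,Z)=r_B(B')$, so (ii) holds.

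The fix is exactly what the paper does in its Case~1: in the base case $i=s(A)-1$, every child of the root of $\val_{\dA}(A[i:])$ is the value of some rank-$0$ nonterminal (the side branches $X_r$ and, at position $k$, the nonterminal $A'$). Hence $\val_{\dA}(B[j+1:])$ equals the value of a rank-$0$ nonterminal regardless of whether $k=l$, forcing $j=s(B)-1$; and then reducedness applied to each child position yields $r_A(A')=r_B(B')$ directly, without ever claiming $A_i=B_j$. In the inductive step your argument is essentially fine, but you should supply the missing justification that $k=l$: if $k\neq l$ then $\val_{\dA}(A[i+1:])=\val_{\dA}(Y_k)$ is the value of a rank-$0$ nonterminal, contradicting $2\le i+1<s(A)$.
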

\begin{proof}
Clearly, (ii) implies (i). Now assume that $\val_{\dA}(A[i:]) = \val_{\dA}(B[j:])$ holds. Let
$\val_{\dB}(A) = A_1 A_2 \cdots A_n A_{n+1}$ and $\val_{\dB}(B) = B_1 B_2 \cdots B_m B_{m+1}$.
By induction on $i$ and $j$, we show that $\val_{\dB}(A[i:s(A)-2]) = \val_{\dB}(B[j:s(B)-2])$ and $r_A(A') = r_B(B')$.

\medskip
\noindent
{\em Case 1.} $i = s(A)-1$. By the definition of $s(A)$, this implies that the tree $\val_{\dA}(A[i:])$
has the form $f(\val_{\dA}(X_1),\ldots, \val_{\dA}(X_k))$, where $f \in \mc F_k$, $k \geq 1$, and $X_1, \ldots, X_k$ are
nonterminals of rank zero. Hence, $$\val_{\dA}(B[j:]) = f(\val_{\dA}(X_1),\ldots, \val_{\dA}(X_k)).$$
But this implies that $\val_{\dA}(B[j+1:])$ is equal to one of the trees $\val_{\dA}(X_l)$. We therefore
get $j = s(B)-1$. Thus, $\val_{\dB}(A[i:s(A)-2]) = \varepsilon = \val_{\dB}(B[j:s(B)-2])$. Moreover,
$\val_{\dA}(r_A(A')) = \val_{\dA}(A[s(A)-1:]) = \val_{\dA}(B[s(B)-1:]) = \val_{\dA}(r_B(B'))$. Since $\dA$ is reduced,
we get $r_A(A') = r_B(B')$.

\medskip
\noindent
{\em Case 2.} $j = s(B)-1$. This case is symmetric to Case~1.

\medskip
\noindent
{\em Case 3.} $i < s(A)-1$ and $j < s(B)-1$. We claim that $A_i = B_j$. If $A_i \neq B_j$, then, since $\dA$ is reduced,
we have $\rhs(A_i) \neq \rhs(B_j)$. With $\val_{\dA}(A[i:]) = \val_{\dA}(B[j:])$, this implies that each of the trees
$\val_{\dA}(A[i+1:])$ and $\val_{\dA}(B[j+1:])$ is equal to a tree of the form $\val_{\dA}(X)$ for $X$ a nonterminal of rank zero.
But this contradicts $i+1 < s(A)$ as well as $j+1 < s(B)$. Hence, we have $A_i = B_j$. Moreover, $\rhs(A_i) = \rhs(B_j)$ implies
with $\val_{\dA}(A[i:]) = \val_{\dA}(B[j:])$ that $\val_{\dA}(A[i+1:]) = \val_{\dA}(B[j+1:])$. We can now conclude with induction.
\end{proof}
Consider a valid sequence
$\gamma \in (L \cup R \cup M)^*$ for $\dA$. We can uniquely factorize $\gamma$
as
\begin{equation} \label{eq-factorized-sequence}
\gamma = \gamma_1 (A_1,k_1,B_1) \gamma_2 (A_2,k_2,B_2) \cdots \gamma_{n-1} (A_{n-1},k_{n-1},B_{n-1}) \gamma_n,
\end{equation}
where $\gamma_i \in (L \cup R)^*$ ($1 \leq i \leq n$) and $(A_i,k_i,B_i) \in M$ ($1 \leq i \leq n-1$).
To simplify the notation, let us set $B_0 = S$.
Hence,  every $\gamma_i$ is either empty or a  valid $B_{i-1}$-sequence for the SLP $\dB$,
and we have defined the position $\pos(\gamma_i)$ in the string $\val_{\dB}(B_{i-1})$ according to Section~\ref{sec-string-trav}.
It is easy to modify our traversal algorithms from the previous section such that for every $1 \leq i \leq n$
we store in the sequence $\gamma$ also the nonterminal $B_{i-1}$ and the number $\pos(\gamma_i)$ right after $\gamma_i$ (if $\gamma_i \neq \varepsilon$), i.e.,
just before $(A_i, k_i, B_i)$. The number $\pos(\gamma_i)$ has to be incremented (resp., decremented) each time one 
moves down (resp., up) in the spine path for $B_{i-1}$.
We will not explicitly write these nonterminals and positions in valid sequences in order do not
complicate the notation.

We would like to use Lemma~\ref{lemma-eq} for equality checks. To do this, we have to assume that
$\pos(\gamma_i) < s(B_{i-1})$ in \eqref{eq-factorized-sequence} for every $1 \leq i \leq n$ with $\gamma_i \neq \varepsilon$
(this corresponds to the assumptions $1 \leq i < s(A)$ and $1 \leq j < s(B)$ in Lemma~\ref{lemma-eq}).
To do this, we have to modify our traversal algorithms from the previous section as follows: Assume that the current
valid sequence is $\gamma$ from  \eqref{eq-factorized-sequence}. As remarked above, we store the numbers
$\pos(\gamma_i)$ right after each $\gamma_i$. Assume that the final number $\pos(\gamma_n)$ has reached the value
$s(B_{n-1})-1$ and we want to move to the $i$th child of the current node. 
We proceed as in Algorithm~\ref{i-th-child}
with one exception: In case (ii) (Section~\ref{sec-TSLP-trav}) we would increase $\pos(\gamma_i)$ to $s(B_{n-1})$.
To avoid this, we start a new valid sequence for the root of the tree $\val_{\dA}(B'_{n-1})$. Note that by the definition of $B'_{n-1}$,
this is exactly the tree rooted at the $i$th child of the node represented by $\gamma$.
So, we can continue the traversal in the tree $\val_{\dA}(B'_{n-1})$. Therefore,
we continue with the sequence $\gamma\mid\mathsf{root}(B'_{n-1})$, where $\mid$ is a separator symbol, and the $\mathsf{root}$-function is
defined at the end of Section~\ref{sec-TSLP-trav}. The navigation to the parent node can be easily adapted as well.
The only new case that we have to add to Algorithm~\ref{parent} is for $\gamma = \alpha\mid\beta$, where $\beta \in (L \cup R)^+$.
In that case, we compute $\beta' = \mathsf{left}(\beta)$ and return $\alpha\mid\beta'$ if $\beta'$ is not undefined, and
$\alpha$ otherwise. Thus, the separator symbol $\mid$ is treated in the same way as triples from $M$.

Let us now consider two sequences $\gamma_1$ and $\gamma_2$ (that may contain the separator symbol $\mid$ as explained in the previous paragraph).
Let $v_i$ be the node of $\val(\dA)$ represented by $\gamma_i$ and let $t_i$ be the subtree of $\val(\dA)$ rooted in $v_i$.
We want to check in time $O(1)$ whether $t_1 = t_2$.
We can first compute in time $O(1)$ the labels $\mathsf{label}(\gamma_1)$ and $\mathsf{label}(\gamma_2)$
of the nodes $v_1$ and $v_2$, respectively. In case one of these labels belongs to $\mc F_0$ (i.e., one of the nodes
$v_1$, $v_2$ is a leaf) we can easily determine whether $t_1 = t_2$. Hence, we can assume that 
neither $v_1$ nor $v_2$ is a leaf. In particular we can assume that 
$\gamma_1 \neq \varepsilon \neq \gamma_2$ (recall that $\varepsilon$ is a valid sequence only in case $\val(\dA)$ consists
of a single node) and that neither $\gamma_1$ nor $\gamma_2$ ends with a triple from $M$. 
Let us factorize $\gamma_i$ as $\gamma_i = \alpha_i \beta_i$,
where $\beta_i$ is the maximal suffix of $\gamma_i$ that
belongs to $(L \cup R)^*$. Hence, we have $\beta_1  \neq \varepsilon \neq \beta_2$.

Assume that $\beta_i$ is a valid $C_i$-sequence of $\dB$,
and that $n_i = \pos(\beta_i)$. Thus, the suffix $\beta_i$ represents the $n_i$-th leaf of the derivation tree of $\dB$ with root $C_i$.
Recall that we store $C_i$ and $n_i$ at the very end of our sequence $\gamma_i$. Hence, we have constant time access to $C_i$ and $n_i$.
We have $C_i \in N_a$ and $n_i < s(C_i)$. With the notation introduced before, we get $t_i = \val_{\dA}(C_i[n_i:])$. Since
$n_1 < s(C_1)$ and $n_2 < s(C_2)$, Lemma~\ref{lemma-eq} implies that $t_1 = t_2$ if and only
if the following two conditions hold:
\begin{enumerate}[(i)]
\item $\val_{\dB}(C_1[n_1:s(C_1)-2]) = \val_{\dB}(C_2[n_2:s(C_2)-2])$ and
\item $r_{C_1}(C_1') = r_{C_2}(C_2')$.
\end{enumerate}
Condition (ii) can be checked in time $O(1)$, since we can precompute in polynomial time $r_A$ and $A'$ for every
$A \in N_a$ by Lemma~\ref{lemma-precomp}.
So, let us concentrate on condition (i). First, we check whether
$s(C_1)-n_1 = s(C_2)-n_2$. If not, then the lengths of
$\val_{\dB}(C_1[n_1:s(C_1)-2])$ and $\val_{\dB}(C_2[n_2:s(C_2)-2])$ differ and we cannot have equality.
Hence, assume that $k := s(C_1)-1-n_1 = s(C_2)-1-n_2$.
Let $\ell$ be the length of the longest common suffix of $\val_{\dB}(C_1[:s(C_1)-2])$ and $\val_{\dB}(C_2[:s(C_2)-2])$.
Then, it remains to check whether $k \leq \ell$. Clearly, in space $O(|\dA|)$ we cannot store explicitly all these lengths $\ell$
for all $C_1, C_2 \in N_a$. Instead, we precompute in polynomial time a modified Patricia tree for the set of strings
$w_A := \val_{\dB}(A[:s(A)-2])^{\text{rev}} \$$ ($\$$ is a new symbol that is appended in order
make the set of strings prefix-free and $w^{\text{rev}}$ is the string $w$ reversed)
for $A \in N_a$. Then, we need to compute in $O(1)$ time
the length of the longest common prefix for two of these strings $w_A$ and $w_B$.
Recall that the  Patricia tree for a set of strings $w_1, \ldots, w_n$ is obtained from the trie for the prefixes of the $w_i$
by eliminating nodes with a single child. But instead of labeling edges of the Patricia tree with factors of the $w_i$, we
label every internal node with the length of the strings that leads from the root to the node.
Let us give an example instead of a formal definition:

\begin{example} \label{ex-patricia}
Consider the strings $w_A = abba\$$, $w_B = abbb\$$, $w_C = ba\$$, $w_D =
baba\$$ and $w_E =babb\$$.
Figure~\ref{fig-patricia} shows their Patricia tree (left) and the
modified Patricia tree (right).
\end{example}
\begin{figure}[t]
\begin{center}
\begin{forest}
[ $\bullet$,
  [ $\bullet$, edge label={node[midway,above,sloped]{$bba$}}
    [ $A$, edge label={node[midway,above,sloped]{$\$a$}} ]
    [ $B$, edge label={node[midway,above,sloped]{$b\$$}} ]
  ]
  [ $\bullet$, edge label={node[midway,above,sloped]{$ba$}}
    [ $C$, edge label={node[midway,above,sloped]{$\$$}} ]
    [ $\bullet$, edge label={node[midway,above,sloped]{$b$}}
      [ $D$, edge label={node[midway,above,sloped]{$\$a$}} ]
      [ $E$, edge label={node[midway,above,sloped]{$b\$$}} ]
    ]
  ]
]
\end{forest}
\qquad
\begin{forest}
[ $0$,
  [ $3$
    [ $A$ ]
    [ $B$ ]
  ]
  [ $2$
    [ $C$ ]
    [ $3$
      [ $D$ ]
      [ $E$ ]
    ]
  ]
]
\end{forest}
\end{center}
\caption{\label{fig-patricia} The Patricia tree (left) and the modified
Patricia tree (right) for Example~\ref{ex-patricia}}
\end{figure}
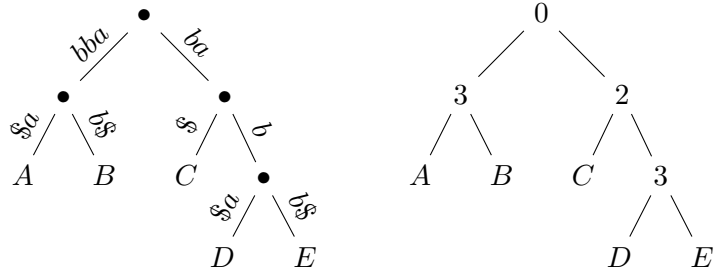
Since our modified Patricia tree has $|N_a|$ many leaves (one for each $A \in N_a$)
and every internal node has at least two children, we have at most $2|N_a|-1$ many nodes
in the tree and every internal node is labeled with an $(\log |t|)$-bit number (note that
the length of every string $\val_{\dB}(A)$ ($A \in N_a$) is bounded by $|t|$.
Hence, on the word RAM model, we can store the modified Patricia tree in space
$O(|\dA|)$. Finally, the length of the longest common prefix of two string $w_A$ and $w_B$
can be obtained by computing the lowest common ancestor of the two leaves corresponding
to the strings $w_A$ and $w_B$ in the Patricia tree. The number stored in
the lowest common ancestor is the  length of the longest common prefix of $w_A$ and $w_B$.
Using a data structure for computing lowest common ancestors in time $O(1)$
\cite{BenderF00,DBLP:journals/siamcomp/SchieberV88}, we obtain an $O(1)$-time implementation
of subtree equality checking. Finally, from Proposition~\ref{prop-SLP-alg} it follows that
the modified Patricia tree for the strings $w_A$ ($A \in N_a$)
can be precomputed in polynomial time.

The following theorem is the main result of this section:
\begin{theorem} \label{thm-traverse+equality}
Given a monadic TSLP $\dA$ for a tree $t = \val(\dA)$
we can compute in polynomial time on a word RAM with register length $O(\log |t|)$ a data structure of size $O(|\dA|)$ that
allows to do the following computations in time $O(1)$, where $\gamma$ and $\gamma'$ are valid sequences (as modified in this section) that represent the tree nodes $v$
and $v'$, respectively: (i)~
Compute the valid sequence for the parent node of $v$, (ii) compute the valid sequence for the $i$th child of $v$, and (iii) check whether the subtrees
rooted in $v$ and $v'$ are equal.
\end{theorem}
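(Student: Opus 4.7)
Parts (i) and (ii) essentially follow from the navigation theorem of the previous section, so the main novelty is the $O(1)$-time equality check in (iii). My strategy is to combine the structural recursion in Lemma~\ref{lemma-eq} with a global precomputation of spine data and a Patricia-tree structure that reduces each equality query to a single lowest-common-ancestor query.

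First, I would preprocess $\dA$: reduce it in polynomial time using Proposition~\ref{prop-TSLP-eq} (no size increase), and for every $A \in N_a$ compute the triple $(s(A), r_A, A')$ via Lemma~\ref{lemma-precomp}. All this fits in $O(|\dA|)$ registers of length $O(\log |t|)$. Next, I would instrument each valid sequence so that inside every maximal $(L \cup R)^*$-block $\gamma_i$ we also cache the base nonterminal $B_{i-1} \in N_a$ and the current position $\pos(\gamma_i)$; both can be maintained in $O(1)$ time per $\mathsf{left}$ or $\mathsf{right}$ call. To enforce the invariant $\pos(\gamma_i) < s(B_{i-1})$ required by Lemma~\ref{lemma-eq}, I would modify Algorithm~\ref{i-th-child} so that when a downward move would push the position to $s(B_{i-1})$, we instead append a separator $\mid$ and start a fresh $\mathsf{root}(B'_{i-1})$ block; this is sound because $\val_{\dA}(B_{i-1}[s(B_{i-1})\!:]) = \val_{\dA}(B'_{i-1})$ by construction. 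Algorithm~\ref{parent} is adapted symmetrically, treating $\mid$ exactly like a triple from $M$.

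Given two valid sequences $\gamma_1, \gamma_2$ (after an $O(1)$ label check to dispose of the leaf cases), I would decompose each as $\gamma_i = \alpha_i \beta_i$ with $\beta_i$ the maximal $(L \cup R)^*$-suffix, read off $C_i$ and $n_i = \pos(\beta_i)$ in constant time, and then apply Lemma~\ref{lemma-eq} to the suffixes $C_i[n_i\!:]$. Condition (ii) of that lemma becomes a single lookup comparing the precomputed pairs $(r_{C_1}, C_1')$ and $(r_{C_2}, C_2')$. For condition (i) I would first test $s(C_1) - n_1 = s(C_2) - n_2$, and then verify that the corresponding common-length suffixes of $\val_{\dB}(C_1[:s(C_1)-2])$ and $\val_{\dB}(C_2[:s(C_2)-2])$ coincide. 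To make this last test constant-time, I would build once, in polynomial time, the modified Patricia tree over the reversed, $\$$-terminated strings $w_A$ for $A \in N_a$, using the SLP substring extraction and LCP primitives of Proposition~\ref{prop-SLP-alg}, and equip it with a standard $O(1)$-time LCA data structure. The integer stored at the LCA of the leaves $w_{C_1}, w_{C_2}$ is the length of their longest common prefix, which I compare against $k = s(C_1)-1-n_1$.

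The main obstacle I anticipate is the construction of the Patricia tree: since the strings $w_A$ can have length exponential in $|\dA|$, they cannot be materialized, so the tree must be built implicitly, using repeated SLP-LCP queries to determine the branching lengths stored at internal nodes and SLP-position access to identify the branching characters. A secondary, more routine obstacle is the bookkeeping around the separator $\mid$ and the cached $(B_{i-1}, \pos(\gamma_i))$ pairs: one must verify that each navigation primitive preserves both the validity of the sequence and the invariant $\pos(\gamma_i) < s(B_{i-1})$, without breaking the constant-time guarantee of the underlying $\mathsf{left}$ and $\mathsf{right}$ routines on the SLP $\dB$.
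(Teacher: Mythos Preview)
Your proposal is correct and follows essentially the same approach as the paper: reduce $\dA$, precompute $(s(A),r_A,A')$ via Lemma~\ref{lemma-precomp}, augment valid sequences with the cached pair $(B_{i-1},\pos(\gamma_i))$ and the separator $\mid$ to maintain the invariant $\pos(\gamma_i)<s(B_{i-1})$, and then reduce the equality test via Lemma~\ref{lemma-eq} to a comparison of $(r_{C_i},C_i')$ plus a longest-common-suffix query answered by an LCA in the modified Patricia tree over the strings $w_A$. The two obstacles you flag---implicit construction of the Patricia tree from SLP primitives and the $O(1)$ bookkeeping around $\mid$---are exactly the points the paper addresses.
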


\section{Discussion} \label{conc}

We have presented a data structure to traverse grammar-compressed ranked trees with constant delay and to check
equality of subtrees.
The solution is based on the ideas of~\cite{GasieniecKPS05} and the fact that
next link queries can be answered in time $O(1)$ (after linear time preprocessing).
It would be interesting to develop an efficient implementation of the technique.
Next link queries can be implemented in many different ways.
One solution given in~\cite{GasieniecKPS05} is based on a variant of the lowest common ancestor algorithm
due to Schieber and Vishkin~\cite{DBLP:journals/siamcomp/SchieberV88}
(described in~\cite{DBLP:books/cu/Gusfield1997}).
Another solution is to use \emph{level-ancestor queries} (together with depth-queries),
as are available in implementation of succinct tree data structures
(e.g., the one of Navarro and Sadakane~\cite{DBLP:journals/talg/NavarroS14}).
Another alternative is to store the first-child/next-sibling encoded binary tree of the original
tries. The first-child/next-sibling encoding is defined for ordered trees. In our situation,
we have to answer next-link queries for the tries $T_L(a)$ and $T_R(a)$ for $a \in \Sigma$,
which are unordered. Hence we order the children of a node in an arbitrary way.
Then the next link of $v_1$ above $v_2$ is equal  to the \emph{lowest common ancestor}
of $v_2$ and the last child of $v_1$ in the original tree. This observation allows to use simple
and efficient lowest common ancestor data structures like the one of Bender and Farach-Colton~\cite{BenderF00}.

Recall that we used polynomial time preprocessing to built up our data structure for subtree equality checks.
It remains to come up with a more precise time bound. We already argued that the problem is at least as difficult
as checking equality of SLP-compressed strings, for which the best known algorithm is quadratic. It would be interesting,
to show that our preprocessing has the same time complexity as checking equality of SLP-compressed strings.




\begin{thebibliography}{10}

\bibitem{BenderF00}
M.~A. Bender and M.~Farach{-}Colton.
\newblock The {LCA} problem revisited.
\newblock In {\em Proceedings of {LATIN} 2000}, volume~1776 of Lecture Notes in Computer Science, pages 88--94. Springer, 2000.

\bibitem{BilleGLW13}
P.~Bille, I.~L. G{\o}rtz, G.~M. Landau, and O.~Weimann.
\newblock Tree compression with top trees.
\newblock {\em Information and Computation}, 243: 166--177, 2015.

 \bibitem{BLRSSW15}
P.~Bille, G.~M. Landau, R.~Raman, K.~Sadakane, S.~R. Satti, and O.~Weimann.
\newblock Random access to grammar-compressed strings and trees.
\newblock {\em SIAM Journal on Computing}, 44(3)513--539:, 2015.

\bibitem{MLMN13}
M.~Bousquet-M{\'e}lou, M.~Lohrey, S.~Maneth, and E.~Noeth.
\newblock {XML} compression via {DAG}s.
\newblock {\em Theory of Computing Systems}, 2014.
\newblock DOI 10.1007/s00224-014-9544-x.

\bibitem{BuLoMa07}
G.~Busatto, M.~Lohrey, and S.~Maneth.
\newblock Efficient memory representation of {XML} document trees.
\newblock {\em Information Systems}, 33(4--5):456--474, 2008.

\bibitem{CaiP95}
J.~Cai and R.~Paige.
\newblock Using multiset discrimination to solve language processing problems
  without hashing.
\newblock {\em Theoretical Computer Science}, 145(1{\&}2):189--228, 1995.

\bibitem{CLLLPPSS05}
M.~Charikar, E.~Lehman, A.~Lehman, D.~Liu, R.~Panigrahy, M.~Prabhakaran,
  A.~Sahai, and A.~Shelat.
\newblock The smallest grammar problem.
\newblock {\em IEEE Transactions on Information Theory}, 51(7):2554--2576,
  2005.

\bibitem{tata97}
H.~Comon, M.~Dauchet, R.~Gilleron, F.~Jacquemard, D.~Lugiez, C.~L{\"o}ding,
  S.~Tison, and M.~Tommasi.
\newblock Tree automata techniques and applications.
\newblock Available at \url{http://tata.gforge.inria.fr/}, 2007.

\bibitem{DBLP:conf/edbt/DelprattRR08}
O.~Delpratt, R.~Raman, and N.~Rahman.
\newblock Engineering succinct {DOM}.
\newblock In {\em {Proceedings of EDBT}}, volume 261 of {ACM} International Conference Proceeding Series, 
pages 49--60, ACM, 2008.
  
\bibitem{GasieniecKPS05}
L.~Gasieniec, R.~M. Kolpakov, I.~Potapov, and P.~Sant.
\newblock Real-time traversal in grammar-based compressed files.
\newblock In {\em {Proceedings of DCC} 2005}, page 458, IEEE Computer Society, 2005.
\newblock Availabe at
  \url{http://www.csc.liv.ac.uk/~leszek/papers/dcc05.ps.gz}.

 \bibitem{HSR15}
L.~H\"ubschle-Schneider and Rajeev Raman.
\newblock Tree compression with top trees revisited.
\newblock In {\em Proceedings of SEA 2015}, volume~9125 of Lecture Notes in Computer Science, pages 15--27.
  Springer, 2015.
\newblock{Long version at arXiv.org, 2015., \url{http://arxiv.org/abs/1506.04499}}.


\bibitem{HuckeLN14}
D.~Hucke, M.~Lohrey, and E.~Noeth.
\newblock Constructing small tree grammars and small circuits for formulas.
\newblock In {\em Proceedings of FSTTCS 2014}, volume~29 of {\em LIPIcs}, pages 457--468. Schloss Dagstuhl -
  Leibniz-Zentrum f\"ur Informatik, 2014.

\bibitem{DBLP:books/cu/Gusfield1997}
D.~Gusfield.
\newblock Algorithms on Strings, Trees, and Sequences - Computer Science and Computational Biology.
\newblock Cambridge University Press, 1997.


\bibitem{Jez15}
A.~Je{\.{z}}.
\newblock Faster fully compressed pattern matching by recompression.
\newblock {\em {ACM} Transactions on Algorithms}, 11(3):20:1--20:43, 2015.

\bibitem{JezLo14approx}
A.~Je{\.{z}} and M.~Lohrey.
\newblock Approximation of smallest linear tree grammars.
\newblock In {\em Proceedings of STACS 2014}, volume~25 of {\em LIPIcs}, pages
  445--457. Schloss Dagstuhl - Leibniz-Zentrum f\"ur Informatik, 2014.

\bibitem{Loh12survey}
M.~Lohrey.
\newblock Algorithmics on {SLP}-compressed strings: A survey.
\newblock {\em Groups Complexity Cryptology}, 4(2):241--299, 2012.

\bibitem{LoMa06}
M.~Lohrey and S.~Maneth.
\newblock The complexity of tree automata and {XPath} on grammar-compressed
  trees.
\newblock {\em Theoretical Computer Science}, 363(2):196--210, 2006.

\bibitem{LohreyMM13}
M.~Lohrey, S.~Maneth, and R.~Mennicke.
\newblock {XML} tree structure compression using {RePair}.
\newblock {\em Information Systems}, 38(8):1150--1167, 2013.

\bibitem{LoMaSS12}
M.~Lohrey, S.~Maneth, and M.~Schmidt-Schau{\ss}.
\newblock Parameter reduction and automata evaluation for grammar-compressed trees.
\newblock {\em Journal of Computer and Systems Science}, 78(5):1651--1669, 2012.

\bibitem{MaSe10}
S.~Maneth and T.~Sebastian.
\newblock Fast and tiny structural self-indexes for {XML}.
\newblock {\em CoRR}, abs/1012.5696, 2010.

\bibitem{ManethS13}
S.~Maneth and T.~Sebastian.
\newblock {XP}ath node selection over grammar-compressed trees.
\newblock In {\em Proceedings of {TTATT} 2013},
Electronic Proceedings in Theoretical Computer Science 134, pages 38--48, 2013.

\bibitem{DBLP:journals/talg/NavarroS14}
G.~Navarro and K.~Sadakane.
\newblock Fully functional static and dynamic succinct trees.
\newblock {\em {ACM} Transactions on Algorithms}, 10(3):16, 2014.

\bibitem{DBLP:journals/siamcomp/SchieberV88}
B.~Schieber and U.~Vishkin.
\newblock On Finding Lowest Common Ancestors: Simplification and Parallelization.
\newblock {\em {SIAM} Journal on Computing}, 17(6):1253--1262, 1988.
\end{thebibliography}

\def\cprime{$'$} \def\cprime{$'$}

\end{document}